\documentclass[runningheads]{llncs}

\usepackage[T1]{fontenc}
\usepackage{graphicx}
\usepackage{color}
\usepackage{algorithm}
\usepackage{comment}
\usepackage[noend]{algpseudocode}
\usepackage{amsmath}
\usepackage{hyperref}
\usepackage{enumitem}
\usepackage{booktabs}
\usepackage{multirow}
\usepackage{amssymb}

\usepackage[table,xcdraw]{xcolor}
\usepackage[table]{xcolor}

\algnewcommand{\LineComment}[1]{\State \(\triangleright\) \textbf{#1}}

\definecolor{lightgray}{gray}{0.9}

\begin{document}
 
\title{Fast Simulation Algorithms for OLH using Binomial Modeling}
 
\titlerunning{Fast Simulation Algorithms for OLH using Binomial Modeling}
 
\author{Berkay Kemal Balioglu \and Alireza Khodaie \and \\ M. Emre Gursoy}
\authorrunning{Balioglu, Khodaie, and Gursoy}

\institute{Department of Computer Engineering, Koç University, Istanbul, Turkey\\
\email{\{bbalioglu23, akhodaie22, emregursoy\}@ku.edu.tr}}
 
\maketitle

\begin{abstract}
Optimized Local Hashing (OLH) is a widely used hash-based Local Differential Privacy (LDP) protocol, and simulation-based experimentation is the standard approach for evaluating OLH and OLH-based applications in research. However, the existing OLH simulations have $O(nd)$ computational complexity, where $n$ is the user population size and $d$ is the domain size, and can lead to significant execution times as $n$ and $d$ grow. In this paper, we propose two fast simulation algorithms for OLH (2-Binom and 3-Binom) grounded in Binomial modeling. Our key insight is that, for any domain value $v$, the total number of users whose perturbed reports support $v$ can be decomposed into a sum of two or three Binomial random variables. Using this insight, our algorithms reduce the simulation complexity to $O(n + d)$ without hurting statistical equivalence. In particular, we theoretically prove that both algorithms yield unbiased frequency estimations with variances identical to those of the original OLH simulations. Experiments on real-world datasets confirm that both approaches reduce execution times from several minutes to milliseconds, yielding significant speedups with no change in utility.

\keywords{Local differential privacy \and optimized local hashing \and privacy-enhancing technologies \and privacy protocols.}
\end{abstract}

\section{Introduction} \label{sec:introduction}

The growing adoption of data-driven services has increased concerns around user privacy, making privacy-preserving data collection an increasingly critical challenge. Local differential privacy (LDP) has emerged as a response to this challenge, offering mathematically rigorous guarantees that protect individual users without relying on a trusted server \cite{cormode2018privacy,erlingsson2014rappor,wang2017locally,yang2023local}. Optimized Local Hashing (OLH) \cite{wang2017locally} is among one of the most widely adopted LDP protocols -- in OLH, each user encodes their true value by applying a randomly drawn hash function before perturbation. This hash-based design makes OLH particularly attractive for large domains; as a result, it has been used extensively in subsequent works and downstream tasks \cite{cormode2019answering,qian2023collaborative,wang2018locally,wang2019locally,xu2023mlpkv}, and it has been implemented in several LDP benchmarking platforms \cite{arcolezi2022multi,cormode2021frequency,khodaie2025postprocessing}. 

Evaluating LDP protocols and applications built on top of them requires simulation-based experiments. A researcher wishing to assess a new application, e.g., one that uses OLH as a building block, typically simulates the entire process on a single machine by iterating through each of the $n$ users one by one, accumulating their perturbed outputs, and finally performing server-side estimation to obtain frequency estimates $\hat{f}_v$ for all values $v$ in domain $\mathcal{D}$. This approach accurately reflects real-world deployments but carries a computational burden that grows with both the number of users $n$ and the domain size $d$. For OLH specifically, there is an $O(nd)$ time complexity, which can cause simulation times of several minutes on moderately sized datasets. 

In this paper, we propose two fast simulation algorithms for OLH (called 2-Binom and 3-Binom) that replace the costly user-by-user simulation with draws from a small number of Binomial random variables. Our key insight is that, for any domain value $v \in \mathcal{D}$, users can be partitioned into two groups based on whether their true value equals $v$ or not. The contribution of the first group to the support of $v$ ($\mathrm{Sup}(v)$) follows a Binomial distribution. For the second group, although each user can contribute to $\mathrm{Sup}(v)$ through two ways (a hash collision between their true value and $v$, or a perturbation to $H_u(v)$) these two sub-cases can be algebraically combined into one Binomial random variable. This yields our 2-Binom algorithm, which reduces the $O(nd)$ complexity to $O(n+d)$, or even $O(d)$ if per-value counts are precomputed and cached. We further propose an alternative algorithm (3-Binom) that keeps the two sub-cases in the second group separate, modeling the number of colliding users as an intermediate Binomial draw before conditioning the remaining perturbation outcomes on it. 

We prove that both 2-Binom and 3-Binom yield unbiased frequency estimates with variance identical to that of standard OLH simulation, and that the two algorithms are statistically equivalent to each other despite their different decomposition structures. We also perform experiments on four real-world datasets across a range of privacy budgets and metrics. Results show that 2-Binom and 3-Binom reduce empirical simulation times from several minutes to milliseconds, with no change in the correctness or utility of the simulation results.

In short, our main contributions are as follows:
\vspace{-4pt}
\begin{itemize}
\item We propose two simulation algorithms (2-Binom and 3-Binom) for OLH, in which the collective behavior of $n$ users is captured by drawing random samples from appropriate Binomial random variables.
\item We formally prove that both 2-Binom and 3-Binom algorithms yield unbiased frequency estimates with variances equal to the original OLH simulations.
\item We empirically validate 2-Binom and 3-Binom on multiple real-world datasets, demonstrating orders-of-magnitude reduction in execution time while producing estimates that are statistically indistinguishable from those of the existing OLH simulation.
\end{itemize}

\vspace{-8pt}

\section{Background and Preliminaries}

\subsection{Data Model and Notation}

We denote the population of users by $\mathcal{P}$, and the number of users in the population by $n = |\mathcal{P}|$. Each user $u \in \mathcal{P}$ holds a single private value $v_u$ drawn from a discrete domain $\mathcal{D}$, and we denote the cardinality of the domain by $d = |\mathcal{D}|$. For any value $v \in \mathcal{D}$, we denote by $n_v$ the number of users whose true value equals $v$, i.e., $n_v = |\{u \in \mathcal{P} : v_u = v\}|$, and by $f_v = n_v / n$ the true frequency of $v$ in the population. Note that $\sum_{v \in \mathcal{D}} n_v = n$ and $\sum_{v \in \mathcal{D}} f_v = 1$. 
 
\subsection{Local Differential Privacy (LDP)}

LDP is a widely adopted framework for collecting user data while preserving individual privacy. In a typical LDP setting, a server (data collector) wishes to gather statistics from a population of users without learning any individual user's true value $v_u$. To protect privacy, each user applies a randomized mechanism $\psi$ to their true value $v_u$ locally on their own device, and transmits the perturbed result to the server. Upon receiving perturbed reports from the entire population, the server applies estimation techniques to recover aggregate statistics. 
 
\begin{definition}[$\varepsilon$-LDP]
\label{def:LDP}
A randomized mechanism $\psi$ satisfies $\varepsilon$-local differential privacy ($\varepsilon$-LDP) if and only if, for any two input values $v_1, v_2 \in \mathcal{D}$:
\begin{equation}
\label{eq:LDP}
\forall\, y \in \mathrm{Range}(\psi): \quad \frac{\Pr[\psi(v_1) = y]}{\Pr[\psi(v_2) = y]} \leq e^{\varepsilon},
\end{equation}
where $\mathrm{Range}(\psi)$ denotes the set of all possible outputs of $\psi$.
\end{definition}
 
Intuitively, $\varepsilon$-LDP ensures that an observer of the perturbed output $y$ cannot distinguish between any two true values $v_1$ and $v_2$ with a likelihood ratio exceeding $e^{\varepsilon}$. The privacy budget $\varepsilon$ controls the strength of privacy: smaller $\varepsilon$ yields stronger privacy protection, while larger values permit higher utility.
 
\subsection{Optimized Local Hashing (OLH)}
\label{sec:olh}

OLH \cite{wang2017locally} is a hash-based LDP protocol in which each user encodes their true value by applying a randomly chosen hash function that maps domain elements into an integer range $\{0, 1, \ldots, g-1\}$ of size $g \geq 2$. The parameter $g$ controls the trade-off between encoding resolution and perturbation likelihood. The choice of $g$ that minimizes estimation variance was shown to be $g = e^{\varepsilon} + 1$~\cite{wang2017locally}.
 
Let $\mathcal{H}$ denote a universal family of hash functions, where each $H \in \mathcal{H}$ satisfies $H: \mathcal{D} \rightarrow \{0, 1, \ldots, g-1\}$. The OLH protocol proceeds as follows.
 
\textbf{Encoding and Perturbation.} Each user $u \in \mathcal{P}$ independently and uniformly samples a hash function $H_u$ from $\mathcal{H}$ and computes the hash of their true value: $x_u = H_u(v_u)$. The hashed value $x_u$ is then perturbed by reporting $x'_u \in \{0, 1, \ldots, g-1\}$ according to:
\begin{equation}
\label{eq:olh_perturb}
\Pr[x'_u = i] = \begin{cases} p = \frac{e^{\varepsilon}}{e^{\varepsilon} + g - 1}, & \text{if } i = x_u \\[3pt] q = \frac{1}{e^{\varepsilon} + g - 1}, & \text{otherwise} \end{cases}
\end{equation}
The user sends the pair $\langle H_u,\, x'_u \rangle$ to the server.
 
\textbf{Server-side Estimation.} Upon receiving reports from all users, the server estimates the frequency of each value $v \in \mathcal{D}$. For a given $v$, the server first computes the support of $v$:
\begin{equation}
\label{eq:olh_sup}
\mathrm{Sup}(v) = \bigl|\{ u \in \mathcal{P} : x'_u = H_u(v) \}\bigr|
\end{equation}
which counts the number of users whose reported hash output is consistent with $v$ under their hash function. The estimated frequency $\hat{f}_v$ is then obtained by:
\begin{equation}
\label{eq:olh_est}
\hat{f}_v = \frac{(e^{\varepsilon} + g - 1)\bigl(g\cdot \mathrm{Sup}(v) - n\bigr)}{(e^{\varepsilon} - 1)(g - 1)\,n}
\end{equation}

\subsection{Simulating OLH: Existing Approach}
\label{sec:simulation}

To empirically evaluate LDP applications, including those involving OLH, researchers rely on simulation-based experiments, i.e., the encoding, perturbation, and estimation steps of all users and the server are executed on a single machine. The representative pseudocode for simulating OLH is given in Algorithm \ref{algo:olh}. The simulation begins by initializing the values of parameters $g$ and $p$, as well as an empty list of reports $L$, which will store the perturbed $\langle H_u,\, x'_u \rangle$ pairs of each user. Then, in the user-side encoding and perturbation phase (lines 4-12), the simulation iterates through each user $u_i \in \mathcal{P}$ one by one. For each user, a random seed $s_{u_i}$ is sampled to determine $H_{u_i}$ (this is the default strategy to sample each user's hash function $H_{u_i}$ randomly from $\mathcal{H}$ in a way that conforms with typical hash libraries in Python, e.g., \texttt{xxhash}). The user's true value $v_{u_i}$ is hashed to produce $x_{u_i}$, and the hashed value is perturbed according to Equation \ref{eq:olh_perturb} to determine $x'_{u_i}$. The perturbed pair $\langle x'_{u_i}, H_{u_i} \rangle$ is stored in $L$ (line 12). After all users have been simulated, the server-side estimation phase begins (lines 13-18). For each value $v \in \mathcal{D}$ to be estimated, its $\mathrm{Sup}(v)$ is computed by iterating through all stored reports in $L$ and counting those whose perturbed hash output is consistent with $v$ under the reported hash function (lines 16-17). Then, $\mathrm{Sup}(v)$ is fed into the OLH estimator to compute $\hat{f}_v$.

\begin{algorithm}[!t]
\caption{Existing OLH Simulation}
\label{algo:olh}
\begin{algorithmic}[1]
    \Statex \textbf{Input:} Domain $\mathcal{D}$, budget $\varepsilon$, values of users in $\mathcal{P}$ $(v_{u_1}, v_{u_2}, \ldots, v_{u_n})$
    \Statex \textbf{Output:} Estimated frequencies $\hat{f}_v$ for all $v \in \mathcal{D}$
    \State Compute $g \leftarrow \lfloor e^{\varepsilon} + 1 \rfloor$,~~ $p \leftarrow \frac{e^{\varepsilon}}{e^{\varepsilon} + g - 1}$
    \State Initialize empty list of reports $L$
    \LineComment{User-side encoding and perturbation}
    \For{each user $u_i \in \mathcal{P}$}
        \State Sample a random seed $s_{u_i}$ and initialize $H_{u_i}$ accordingly
        \State $x_{u_i} \leftarrow H_{u_i}(v_{u_i}) \bmod g$
        \State $r \leftarrow$ sample a random float in $[0,1]$
        \If{$r \leq p$}
            \State $x'_{u_i} \leftarrow x_{u_i}$
        \Else
            \State $x'_{u_i} \leftarrow$ sample uniformly from $\{0,\ldots,g-1\} \setminus \{x_{u_i}\}$
        \EndIf
        \State Store $\langle x'_{u_i},\, H_{u_i} \rangle$ in $L$
    \EndFor
    \LineComment{Server-side estimation}
    \For{each value $v \in \mathcal{D}$}
        \State Initialize $\mathrm{Sup}(v) \leftarrow 0$
        \For{each $\langle x'_{u_i},\, H_{u_i} \rangle$ in $L$}
            \State Increment $\mathrm{Sup}(v)$ by +1 iff $x'_{u_i} = H_{u_i}(v) \bmod g $
        \EndFor
        \State Compute $\hat{f}_v \leftarrow\frac{(e^{\varepsilon} + g - 1)(g\cdot \mathrm{Sup}(v) - n)}{(e^{\varepsilon} - 1)(g - 1)\,n}$  
    \EndFor
    \State \Return $\hat{f}_v$ for all $v \in \mathcal{D}$
\end{algorithmic}
\label{alg:olh}
\end{algorithm}

The overall complexity of Algorithm \ref{algo:olh} is $O(nd)$. The user-side phase is $O(n)$ since it iterates over all $n$ users, but the server-side phase dominates the total runtime with $O(nd)$ complexity because it needs to compute $\mathrm{Sup}(v)$ for all $d$ values. Computing $\mathrm{Sup}(v)$ requires iterating over the list $L$, which contains $n$ elements. In practice, this dominates the total runtime because of the many hash computations that need to be performed (line 17 is performed $O(nd)$ times). As $n$ and $d$ grow -- for instance, $n$ in the order of millions and $d$ in the order of hundreds -- this $O(nd)$ cost causes simulation times to grow. Addressing this scalability problem is the primary motivation of our paper.

\section{Proposed Methodology}
 
\subsection{Mathematical Intuition}
\label{sec:insight}

We first analyze how the collective effect of OLH can be modeled using Binomial random variables. For value $v \in \mathcal{D}$, we observe that each user $u_i \in \mathcal{P}$ independently contributes either 0 or 1 to $\mathrm{Sup}(v)$, and we partition the $n$ users into two groups based on their true value:

\textbf{Group 1: Users with true value $v$ ($n_v$ users).} For a user $u_i$ with $v_{u_i} = v$, their hashed value is $x_{u_i} = H_{u_i}(v_{u_i}) \bmod g = H_{u_i}(v) \bmod g$. This user contributes +1 to $\mathrm{Sup}(v)$ if and only if the perturbed output satisfies $x'_{u_i} = x_{u_i}$, which by the perturbation rule in Eq.~\ref{eq:olh_perturb} occurs with probability $p$. Therefore, each user in Group 1 contributes an independent Bernoulli($p$) random variable to $\mathrm{Sup}(v)$, and summing over all $n_v$ such users yields:
\begin{equation}
\label{eq:group1}
    X \sim \mathrm{Binomial}(n_v,\, p)
\end{equation}

\noindent\textbf{Group 2: Users with true value $\neq v$ ($n - n_v$ users).} For a user $u_i$ with $v_{u_i} \neq v$, their hashed value is $x_{u_i} = H_{u_i}(v_{u_i}) \bmod g$. This user contributes +1 to $\mathrm{Sup}(v)$ if and only if $x'_{u_i} = H_{u_i}(v) \bmod g$, which can occur via two mutually exclusive sub-cases:

\textbf{Sub-case A:} Hash function $H_{u_i}$ causes a hash collision between $v_{u_i}$ and $v$, i.e., $H_{u_i}(v_{u_i}) \bmod g = H_{u_i}(v) \bmod g$, and also, the perturbation in Eq.~\ref{eq:olh_perturb} retains this value. By the universality of $\mathcal{H}$, the probability of such a hash collision is $\frac{1}{g}$, and the perturbation retains the value with probability $p$. The probability of this sub-case is therefore $\frac{p}{g}$. 

\textbf{Sub-case B:} There is no hash collision, i.e., $H_{u_i}(v_{u_i}) \bmod g \neq H_{u_i}(v) \bmod g$, but the perturbation in Eq.~\ref{eq:olh_perturb} flips $x_{u_i}$ to specifically become $x'_{u_i} = H_{u_i}(v) \bmod g$. The probability of no hash collision is $\frac{g-1}{g}$, and the perturbation outputs a specific other value with probability $q$. The probability of this sub-case is therefore $\frac{(g-1)q}{g}$.

Combining the two sub-cases and substituting $p = \frac{e^{\varepsilon}}{e^{\varepsilon}+g-1}$ and $q = \frac{1}{e^{\varepsilon}+g-1}$, the total probability that a user in Group 2 contributes +1 to $\mathrm{Sup}(v)$ is:
\begin{equation}
    \frac{p}{g} + \frac{(g-1)q}{g} = \frac{p + (g-1)q}{g} = \frac{\frac{e^{\varepsilon}}{e^{\varepsilon}+g-1} + \frac{g-1}{e^{\varepsilon}+g-1}}{g} = \frac{e^{\varepsilon} + g - 1}{g(e^{\varepsilon}+g-1)} = \frac{1}{g}
\end{equation}
The two sub-cases therefore combine into a single Bernoulli$(\frac{1}{g})$ per user, and summing over all $n - n_v$ users in Group 2 yields:
\begin{equation}
    Y \sim \mathrm{Binomial}\!\left(n - n_v,\, \frac{1}{g}\right)
\end{equation}

\textbf{Finally,} since $X$ and $Y$ are independent, combining Groups 1 and 2, the total support for any value $v \in \mathcal{D}$ can be expressed as:
\begin{equation}
\label{eq:decomp}
    \mathrm{Sup}(v) = X + Y, \text{where } X \sim \mathrm{Binomial}(n_v,\, p) \text{, } Y \sim \mathrm{Binomial}\!\left(n - n_v,\, \frac{1}{g}\right)
\end{equation}
This decomposition is the key insight underlying our proposed methodology. Rather than simulating all $n$ users individually to compute $\mathrm{Sup}(v)$, we only need to draw two Binomial random variables per domain value $v \in \mathcal{D}$. The only inputs required are $n$, $n_v$, $p$, $q$, and $g$, all of which are either determined by the dataset or by the protocol parameters. In the next subsection, we show how this insight translates into a fast simulation algorithm.

\subsection{Proposed Simulation Algorithm (2-Binom)}

We now translate this mathematical insight into a fast simulation algorithm for OLH. Our proposed 2-Binom simulation algorithm is given in Algorithm \ref{alg:2binom}. 2-Binom replaces the user-by-user simulation with draws from two Binomial random variables. It proceeds in three phases: First, it computes $g$ and $p$. Second, between lines 2-4, it iterates through the user population once to compute the per-value counts $n_v$ for all $v \in \mathcal{D}$ and the total population size $n$. Third, between lines 5-10, it iterates over each value $v \in \mathcal{D}$ and applies the Binomial approach from Eq.~\ref{eq:decomp}. For each $v$, two independent Binomial random variables are created and samples are drawn from them: $X \sim \mathrm{Binomial}(n_v, p)$ and $Y \sim \mathrm{Binomial}(n - n_v, \frac{1}{g})$. The sum of the drawn samples gives $\mathrm{Sup}(v)$, from which the estimated frequency $\hat{f}_v$ is computed via the estimator on line 10.

\begin{algorithm}[!t]
\caption{Proposed 2-Binom Simulation Algorithm} \label{alg:2binom}
\begin{algorithmic}[1]
    \Statex \textbf{Input:} Domain $\mathcal{D}$, budget $\varepsilon$, values of users in $\mathcal{P}$ $(v_{u_1}, v_{u_2}, \ldots, v_{u_n})$
    \Statex \textbf{Output:} Estimated frequencies $\hat{f}_v$ for all $v \in \mathcal{D}$
    \State Compute $g \leftarrow \lfloor e^{\varepsilon} + 1 \rfloor$,~~ $p \leftarrow \frac{e^{\varepsilon}}{e^{\varepsilon} + g - 1}$
    \LineComment{Find $n_v$ and $n$ from user values}
    \For{each user $u_i \in \mathcal{P}$}
        \State $n_{v_{u_i}} \leftarrow n_{v_{u_i}} + 1$, \quad $n \leftarrow n + 1$
    \EndFor
    \LineComment{Binomial approach}
    \For{each value $v \in \mathcal{D}$}
        \State $x \leftarrow$ draw a random sample from $\mathrm{Binomial}(n_v,\, p)$
        \State $y \leftarrow$ draw a random sample from $\mathrm{Binomial}\!\left(n - n_v,\, \frac{1}{g}\right)$
        \State $\mathrm{Sup}(v) \leftarrow x + y$
        \State $\hat{f}_v \leftarrow \frac{\mathrm{Sup}(v) - (n/g)}{n \cdot (p - (1/g))}$
    \EndFor
    \State \Return $\hat{f}_v$ for all $v \in \mathcal{D}$
\end{algorithmic}
\end{algorithm}

The time complexity of 2-Binom is $O(n + d)$. Computing $g$ and $p$ are $O(1)$. Iterating over all $n$ users to compute the per-value counts $n_v$ is $O(n)$. Iterating over all $v \in \mathcal{D}$ to compute $\hat{f}_v$ using the Binomial approach is $O(d)$ because the loop iterates $d$ times, and the computations performed within each iteration are $O(1)$. Therefore, the overall complexity of Algorithm \ref{alg:2binom} is $O(n + d)$. Note that this is lower than the $O(nd)$ complexity of Algorithm \ref{algo:olh}. 

The complexity of Algorithm \ref{alg:2binom} can be further reduced to $O(d)$ with a simple but practically important observation. In many LDP experiments, the same dataset is used across multiple simulation runs, e.g., when experiments are repeated to achieve statistical significance, or when $\varepsilon$ is varied across a range of values on a fixed dataset. In such cases, the per-value counts $n_v$ and the population size $n$ are identical across all runs. Therefore, lines 2-4 of Algorithm \ref{alg:2binom} need to be executed only once, and its result can be stored and reused. Subsequent simulation runs then only execute the last phase (lines 5-10), which has complexity $O(d)$. This means that after a one-time $O(n)$ pre-processing cost, an arbitrary number of simulation repetitions can each be completed in $O(d)$ time.

\subsection{Utility Analysis of 2-Binom}
\label{sec:utility}

To formally establish that Algorithm \ref{alg:2binom} produces correct results (i.e., equivalent to existing OLH simulations), we prove two properties: (i) estimations are unbiased, and (ii) estimation variances are identical to that of Algorithm \ref{algo:olh}. 

\begin{theorem}[Unbiasedness]
\label{thm:unbiased}
The frequency estimations $\hat{f}_v$ produced by Algorithm~2 are unbiased, i.e., $\mathbb{E}[\hat{f}_v] = f_v$ for all $v \in \mathcal{D}$.
\end{theorem}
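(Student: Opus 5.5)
The argument is linearity of expectation applied to the estimator on line~10 of Algorithm~\ref{alg:2binom}. In that algorithm, $\mathrm{Sup}(v) = x + y$ with $x \sim \mathrm{Binomial}(n_v, p)$ and $y \sim \mathrm{Binomial}(n - n_v, \frac{1}{g})$. By the standard Binomial mean,
\[
\mathbb{E}[\mathrm{Sup}(v)] = n_v p + \frac{n - n_v}{g}.
\]
This holds regardless of the independence of $x$ and $y$ (independence only matters for the variance result that follows).

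Next, the estimator $\hat{f}_v = \frac{\mathrm{Sup}(v) - n/g}{n(p - 1/g)}$ is affine in $\mathrm{Sup}(v)$, and $n$, $n_v$, $p$, $g$ are deterministic given the dataset and $\varepsilon$. We may therefore pass the expectation inside:
\[
\mathbb{E}[\hat{f}_v] = \frac{n_v p + \frac{n-n_v}{g} - \frac{n}{g}}{n\,(p - \frac{1}{g})} = \frac{n_v\,(p - \frac{1}{g})}{n\,(p - \frac{1}{g})} = \frac{n_v}{n} = f_v .
\]

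The only point needing care is that the denominator is nonzero, i.e.\ $p \neq 1/g$. Since $p = \frac{e^{\varepsilon}}{e^{\varepsilon}+g-1}$, we have $p > \frac{1}{g}$ iff $g e^{\varepsilon} > e^{\varepsilon} + g - 1$, i.e.\ $(g-1)(e^{\varepsilon}-1) > 0$. This holds for $\varepsilon > 0$ and $g \geq 2$, which is guaranteed by $g = \lfloor e^{\varepsilon}+1 \rfloor \geq 2$ when $\varepsilon > 0$.

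Finally, I would note that the estimator on line~10 coincides with Eq.~\eqref{eq:olh_est}. Indeed, $n(p - 1/g) = \frac{n(e^{\varepsilon}-1)(g-1)}{g(e^{\varepsilon}+g-1)}$, and multiplying numerator and denominator by $g(e^{\varepsilon}+g-1)$ recovers Eq.~\eqref{eq:olh_est}. So the computation also shows that the original OLH estimator is unbiased under the decomposition in Eq.~\eqref{eq:decomp}.
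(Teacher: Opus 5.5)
Your proof is correct and takes essentially the same approach as the paper. Both use linearity of expectation to get $\mathbb{E}[\mathrm{Sup}(v)] = n_v p + (n-n_v)/g$, then substitute into the affine estimator on line~10 so the numerator collapses to $n_v(p - 1/g)$ and the result is $f_v$. Your extra checks, namely that $p > 1/g$ so the denominator is nonzero, that independence is not needed for the mean, and that line~10 coincides with Eq.~\ref{eq:olh_est}, are all correct, and the paper leaves them implicit.
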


\begin{proof}
Applying the expectation to the estimator on line 10 of Algorithm \ref{alg:2binom}:
\begin{equation}
\mathbb{E}[\hat{f}_v] = \mathbb{E}\!\left[\frac{\mathrm{Sup}(v) - n/g}{n \cdot (p - 1/g)}\right] = \frac{\mathbb{E}[\mathrm{Sup}(v)] - n/g}{n \cdot (p - 1/g)}
\end{equation}
From Eq.~\ref{eq:decomp}, we have $\mathrm{Sup}(v) = X + Y$ where $X \sim \mathrm{Binomial}(n_v, p)$ and $Y \sim \mathrm{Binomial}(n - n_v, 1/g)$. Plugging these into the above equation:
\begin{equation}
\mathbb{E}[\mathrm{Sup}(v)] = \mathbb{E}[X] + \mathbb{E}[Y] = n_v \cdot p + (n - n_v) \cdot \frac{1}{g}
\end{equation}
Substituting into the estimator:
\begin{equation}
\mathbb{E}[\hat{f}_v] = \frac{n_v \cdot p + (n - n_v) \cdot \frac{1}{g} - \frac{n}{g}}{n \cdot (p - \frac{1}{g})} = \frac{n_v \cdot p + \frac{n - n_v}{g} - \frac{n}{g}}{n \cdot (p - \frac{1}{g})}
\end{equation}
Simplifying the numerator:
\begin{equation}
n_v \cdot p + \frac{n - n_v}{g} - \frac{n}{g} = n_v \cdot p + \frac{n - n_v - n}{g} = n_v \cdot p - \frac{n_v}{g} = n_v \cdot \left(p - \frac{1}{g}\right)
\end{equation}
Therefore:
\begin{equation}
\mathbb{E}[\hat{f}_v] = \frac{n_v \cdot (p - \frac{1}{g})}{n \cdot (p - \frac{1}{g})} = \frac{n_v}{n} = f_v
\end{equation}
\end{proof}

\begin{theorem}[Variance]
\label{thm:variance}
The variance of the frequency estimations $\hat{f}_v$ produced by Algorithm \ref{alg:2binom} is:
\begin{equation}
\label{eq:variance}
\mathrm{Var}[\hat{f}_v] = \frac{\frac{1}{g}\left(1 - \frac{1}{g}\right)}{n\left(p - \frac{1}{g}\right)^2} + \frac{f_v\left(1 - p - \frac{1}{g}\right)}{n\left(p - \frac{1}{g}\right)}
\end{equation}
\end{theorem}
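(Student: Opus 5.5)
Since the estimator on line 10 of Algorithm~\ref{alg:2binom} is an affine function of $\mathrm{Sup}(v)$, the proof reduces to computing $\mathrm{Var}[\mathrm{Sup}(v)]$. First I pull out the deterministic scaling:
\[
\mathrm{Var}[\hat{f}_v] = \frac{\mathrm{Var}[\mathrm{Sup}(v)]}{n^2\left(p - \frac{1}{g}\right)^2}.
\]
Next I invoke the decomposition in Eq.~\ref{eq:decomp}. There $X \sim \mathrm{Binomial}(n_v,p)$ and $Y \sim \mathrm{Binomial}(n-n_v,1/g)$ are independent, since they are drawn separately on lines 6--7. Hence
\[
\mathrm{Var}[\mathrm{Sup}(v)] = n_v p(1-p) + (n-n_v)\tfrac{1}{g}\left(1-\tfrac{1}{g}\right).
\]

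The remaining work is algebra to match the two-term form of Eq.~\ref{eq:variance}. I rewrite $(n-n_v)\frac1g(1-\frac1g)$ as $n\frac1g(1-\frac1g) - n_v\frac1g(1-\frac1g)$. After dividing by $n^2(p-\frac1g)^2$, the part proportional to $n$ gives exactly the first term $\frac{\frac1g(1-\frac1g)}{n(p-\frac1g)^2}$. The part proportional to $n_v = n f_v$ becomes
\[
\frac{f_v\left[p(1-p) - \frac1g\left(1-\frac1g\right)\right]}{n\left(p-\frac1g\right)^2}.
\]
The key step is to factor the bracket as a difference of values of $t\mapsto t(1-t)$:
\[
p - p^2 - \tfrac1g + \tfrac1{g^2} = \left(p-\tfrac1g\right) - \left(p^2 - \tfrac{1}{g^2}\right) = \left(p-\tfrac1g\right)\left(1-p-\tfrac1g\right).
\]
One factor of $(p-\frac1g)$ then cancels, which yields the second term $\frac{f_v(1-p-\frac1g)}{n(p-\frac1g)}$.

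The only step needing care is this factorization. Everything else is routine: independence of $X$ and $Y$ and the binomial variance formula. No probabilistic obstacle arises, because the algorithm samples $X$ and $Y$ directly as independent binomials. Equality with the variance of Algorithm~\ref{algo:olh} then follows from the per-user Bernoulli analysis in Section~\ref{sec:insight}. That analysis shows the original simulation's $\mathrm{Sup}(v)$ is a sum of independent indicators with the same parameters, so it has the same variance.
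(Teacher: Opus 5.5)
Your proposal is correct and follows the paper's proof essentially step for step: you factor out the affine scaling, add the variances of the independent binomials $X$ and $Y$, split the $(n-n_v)$ term, and use $p(1-p)-\frac{1}{g}\left(1-\frac{1}{g}\right)=\left(p-\frac{1}{g}\right)\left(1-p-\frac{1}{g}\right)$ to cancel one factor of $p-\frac{1}{g}$. (A trivial nit: the two binomial draws are on lines 7--8 of Algorithm~\ref{alg:2binom}, not lines 6--7.)
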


\begin{proof}
We compute the variance of the estimator on line~10 of Algorithm \ref{alg:2binom}. Since $n$ and $g$ are constants, we have:
\begin{equation}
\mathrm{Var}[\hat{f}_v] = \mathrm{Var}\!\left[\frac{\mathrm{Sup}(v) - n/g}{n \cdot (p - 1/g)}\right] = \frac{\mathrm{Var}[\mathrm{Sup}(v)]}{n^2 \cdot (p - 1/g)^2}
\end{equation}
Since $X$ and $Y$ are independent, the variance of their sum is the sum of their variances:
\begin{equation}
\mathrm{Var}[\mathrm{Sup}(v)] = \mathrm{Var}[X + Y] = \mathrm{Var}[X] + \mathrm{Var}[Y]
\end{equation}
Recalling that $X \sim \mathrm{Binomial}(n_v, p)$ and $Y \sim \mathrm{Binomial}(n - n_v, 1/g)$, we have:
\begin{equation}
\mathrm{Var}[X] = n_v \cdot p \cdot (1 - p), \qquad \mathrm{Var}[Y] = (n - n_v) \cdot \frac{1}{g} \cdot \left(1 - \frac{1}{g}\right) \label{eq:emre}
\end{equation}
Substituting back and separating into two terms by expanding $(n - n_v)$:
\begin{align}
\mathrm{Var}[\hat{f}_v] &= \frac{n_v \cdot p(1-p) + (n-n_v) \cdot \frac{1}{g}\left(1 - \frac{1}{g}\right)}{n^2 \cdot (p - 1/g)^2} \\
&= \frac{n \cdot \frac{1}{g}\left(1 - \frac{1}{g}\right) + n_v \cdot p(1-p) - n_v \cdot \frac{1}{g}\left(1 - \frac{1}{g}\right)}{n^2 \cdot (p - 1/g)^2} \\
&= \frac{\frac{1}{g}\left(1-\frac{1}{g}\right)}{n\left(p-\frac{1}{g}\right)^2} + \frac{n_v\left[p(1-p) - \frac{1}{g}\left(1-\frac{1}{g}\right)\right]}{n^2\left(p-\frac{1}{g}\right)^2}
\end{align}
For the second term, we substitute $n_v = n \cdot f_v$ and simplify the bracket:
\begin{align*}
p(1-p) - \frac{1}{g}\left(1 - \frac{1}{g}\right) &= p - p^2 - \frac{1}{g} + \frac{1}{g^2} \nonumber = \left(p - \frac{1}{g}\right) - \left(p^2 - \frac{1}{g^2}\right) \nonumber \\
&= \left(p - \frac{1}{g}\right) - \left(p - \frac{1}{g}\right)\left(p + \frac{1}{g}\right) \nonumber = \left(p - \frac{1}{g}\right)\left(1 - p - \frac{1}{g}\right)
\end{align*}
Substituting back:
\begin{equation}
\mathrm{Var}[\hat{f}_v] = \frac{\frac{1}{g}\left(1-\frac{1}{g}\right)}{n\left(p-\frac{1}{g}\right)^2} + \frac{f_v \cdot \left(p - \frac{1}{g}\right)\left(1 - p - \frac{1}{g}\right)}{n \cdot \left(p-\frac{1}{g}\right)^2}
\end{equation}
\begin{equation}
= \frac{\frac{1}{g}\left(1-\frac{1}{g}\right)}{n\left(p-\frac{1}{g}\right)^2} + \frac{f_v\left(1 - p - \frac{1}{g}\right)}{n\left(p-\frac{1}{g}\right)}
\end{equation}
\end{proof}

To verify the correctness of Theorem \ref{thm:variance}, we substitute the optimal OLH parameters $g = e^{\varepsilon} + 1$ and $p = \frac{e^{\varepsilon}}{e^{\varepsilon} + g - 1}$ into Eq.~\ref{eq:variance}. Under this parameterization, $\frac{1}{g} = \frac{1}{e^{\varepsilon}+1}$, and $p - \frac{1}{g} = \frac{1}{2} - \frac{1}{e^{\varepsilon}+1} = \frac{e^{\varepsilon}-1}{2(e^{\varepsilon}+1)}$. Plugging these into Eq.~\ref{eq:variance} and simplifying yields:
\begin{equation}
\mathrm{Var}[\hat{f}_v] = \frac{4e^{\varepsilon}}{n(e^{\varepsilon}-1)^2} + \frac{f_v}{n}
\end{equation}
This matches the known variance of the OLH protocol derived in~\cite{wang2017locally}, confirming that Algorithm \ref{alg:2binom} preserves the variance of the original OLH simulation exactly.

\subsection{Alternative Approach: 3-Binom}

Recall that in Section \ref{sec:insight}, we divided Group 2 into two sub-cases, and then merged them into a single $Y \sim \mathrm{Binomial}\!\left(n - n_v,\, \frac{1}{g}\right)$ at the end. In this section, we present an alternative approach that keeps the two sub-cases separate. Thus, instead of two Binomial random variables, we have three Binomial random variables (hence the name 3-Binom). We show that this alternative formulation also produces unbiased frequency estimates and have identical variance to 2-Binom.

Recall that sub-case A consists of users whose true value is not equal to $v$, but they have a hash collision. Let $n_c$ denote the number of users who have such a hash collision. A key observation is that $n_c$ is not a fixed quantity, but rather, a random variable. Taking into account the collision probability $\frac{1}{g}$, we have: $n_c \sim \mathrm{Binomial}\!\left(n - n_v,\, \frac{1}{g}\right)$. Given the value of $n_c$, the $n - n_v$ Group 2 users are split into two sub-cases. Sub-case A consists of $n_c$ users who experienced a hash collision; each such user contributes to $\mathrm{Sup}(v)$ if and only if the perturbation keeps their hashed value, which occurs with probability $p$. Sub-case B consists of the remaining $n - n_v - n_c$ users who did not experience a hash collision; each such user contributes to $\mathrm{Sup}(v)$ if and only if the perturbation flips their output to the specific value $H_{u_i}(v) \bmod g$, which occurs with probability $q$. Then, in 3-Binom, summing the contributions of each group yields: $\mathrm{Sup}(v) = X + Y_1 + Y_2$, where: $X \sim \mathrm{Binomial}(n_v, p)$ represents Group 1 users (same as in 2-Binom), $Y_1 \sim \mathrm{Binomial}(n_c,\, p)$ represents sub-case A of Group 2, and $Y_2 \sim \mathrm{Binomial}(n - n_v - n_c,\, q)$ represents sub-case B of Group 2.

Algorithm \ref{alg:3binom} shows the simulation algorithm for 3-Binom. It follows the same structure as Algorithm \ref{alg:2binom}; the main difference is between lines 8-11. Here, to simulate the two sub-cases in Group 2, we first draw $n_c$, then use it to draw samples for $y_1$ and $y_2$, and finally, $\mathrm{Sup}(v)$ is the summation of 3 random samples. The time complexity is $O(n+d)$. 

\begin{algorithm}[!t]
\caption{Proposed 3-Binom Simulation Algorithm} \label{alg:3binom}
\begin{algorithmic}[1]
    \Statex \textbf{Input:} Domain $\mathcal{D}$, budget $\varepsilon$, values of users in $\mathcal{P}$ $(v_{u_1}, v_{u_2}, \ldots, v_{u_n})$
    \Statex \textbf{Output:} Estimated frequencies $\hat{f}_v$ for all $v \in \mathcal{D}$
    \State Compute $g \leftarrow \lfloor e^{\varepsilon} + 1 \rfloor$,~~ $p \leftarrow \frac{e^{\varepsilon}}{e^{\varepsilon} + g - 1}$,~~ $q \leftarrow \frac{1}{e^{\varepsilon} + g - 1}$
    \LineComment{Find $n_v$ and $n$ from user values}
    \For{each user $u_i \in \mathcal{P}$}
        \State $n_{v_{u_i}} \leftarrow n_{v_{u_i}} + 1$, \quad $n \leftarrow n + 1$
    \EndFor
    \LineComment{3-Binom approach}
    \For{each value $v \in \mathcal{D}$}
        \State $x \leftarrow$ draw a random sample from $\mathrm{Binomial}(n_v,\, p)$
        \State $n_c \leftarrow$ draw a random sample from $\mathrm{Binomial}\!\left(n - n_v,\, \frac{1}{g}\right)$
        \State $y_1 \leftarrow$ draw a random sample from $\mathrm{Binomial}(n_c,\, p)$
        \State $y_2 \leftarrow$ draw a random sample from $\mathrm{Binomial}(n - n_v - n_c,\, q)$
        \State $\mathrm{Sup}(v) \leftarrow x + y_1 + y_2$
        \State $\hat{f}_v \leftarrow \frac{\mathrm{Sup}(v) - (n/g)}{n \cdot (p - (1/g))}$
    \EndFor
    \State \Return $\hat{f}_v$ for all $v \in \mathcal{D}$
\end{algorithmic}
\end{algorithm}

\subsection{Utility Analysis of 3-Binom}
\label{sec:utility3binom}

We perform a utility analysis for 3-Binom which is similar to 2-Binom. In contrast with 2-Binom, the proofs for 3-Binom require careful treatment of the conditional randomness introduced by $n_c$, and rely on the law of total expectation to handle the two layers of randomness.

\begin{theorem}[Unbiasedness]
The frequency estimates $\hat{f}_v$ produced by Algorithm \ref{alg:3binom} are unbiased, i.e., $\mathbb{E}[\hat{f}_v] = f_v$ for all $v \in \mathcal{D}$.
\end{theorem}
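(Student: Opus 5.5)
The plan is to reduce the claim to computing $\mathbb{E}[\mathrm{Sup}(v)]$. Once we show $\mathbb{E}[\mathrm{Sup}(v)] = n_v p + (n-n_v)\frac{1}{g}$, the rest of the argument repeats the proof of Theorem~\ref{thm:unbiased}. The reason is that line 12 of Algorithm~\ref{alg:3binom} uses exactly the same affine estimator as line 10 of Algorithm~\ref{alg:2binom}. By linearity of expectation we have $\mathbb{E}[\hat{f}_v] = \frac{\mathbb{E}[\mathrm{Sup}(v)] - n/g}{n(p-1/g)}$, and the numerator simplification carried out in that proof then gives $f_v$ directly.

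To compute $\mathbb{E}[\mathrm{Sup}(v)]$, I write $\mathrm{Sup}(v) = X + Y_1 + Y_2$ and apply linearity, so that $\mathbb{E}[\mathrm{Sup}(v)] = \mathbb{E}[X] + \mathbb{E}[Y_1] + \mathbb{E}[Y_2]$. The first term is immediate: $\mathbb{E}[X] = n_v p$, since $X$ does not depend on $n_c$.

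The remaining two terms are the only step needing care, because the trial counts of $Y_1$ and $Y_2$ are random. I handle them with the law of total expectation, conditioning on $n_c$. Conditionally, $\mathbb{E}[Y_1 \mid n_c] = n_c p$ and $\mathbb{E}[Y_2 \mid n_c] = (n-n_v-n_c) q$. Using $\mathbb{E}[n_c] = (n-n_v)/g$, these give
\begin{equation*}
\mathbb{E}[Y_1] = \frac{(n-n_v)p}{g}, \qquad \mathbb{E}[Y_2] = (n-n_v)\Bigl(1-\frac{1}{g}\Bigr)q = \frac{(n-n_v)(g-1)q}{g}.
\end{equation*}

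Adding the two terms gives $\mathbb{E}[Y_1+Y_2] = (n-n_v)\frac{p+(g-1)q}{g}$. I then invoke the identity $p+(g-1)q = 1$, established in Section~\ref{sec:insight} when the two sub-cases were merged into a single Bernoulli$(\frac{1}{g})$. This reduces the sum to $(n-n_v)/g$, so $\mathbb{E}[\mathrm{Sup}(v)]$ matches the 2-Binom expectation. Substituting into the estimator completes the argument.

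One caveat needs flagging. Algorithm~\ref{alg:3binom} computes $g=\lfloor e^\varepsilon+1\rfloor$, while $q = 1/(e^\varepsilon+g-1)$ uses that same $g$. The identity $p+(g-1)q = 1$ is exact for any $g$ used consistently in $p$ and $q$, so the floor causes no issue.
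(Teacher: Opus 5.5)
Your proposal is correct and follows essentially the same route as the paper: you condition on $n_c$ via the law of total expectation, use $\mathbb{E}[n_c]=(n-n_v)/g$, and reduce $\mathbb{E}[Y_1+Y_2]$ to $(n-n_v)/g$, so the 2-Binom argument finishes the proof. The only cosmetic difference is that the paper first combines the conditional means into $n_c(p-q)+(n-n_v)q$ and then checks $\frac{p-q}{g}+q=\frac{1}{g}$; this is the same identity $p+(g-1)q=1$ you invoke, and your remark that it holds exactly for the floored $g$ is correct.
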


\begin{proof}
Applying the expectation to the estimator on line 12 of Algorithm \ref{alg:3binom}:
\begin{equation} \label{eq:24}
    \mathbb{E}[\hat{f}_v] = \mathbb{E}\!\left[\frac{\mathrm{Sup}(v) - n/g}{n \cdot (p - 1/g)}\right] = \frac{\mathbb{E}[\mathrm{Sup}(v)] - n/g}{n \cdot (p - 1/g)}
\end{equation}
From 3-Binom, we know that: $\mathrm{Sup}(v) = X + Y_1 + Y_2$. By linearity of expectation:
\begin{equation}
    \mathbb{E}[\mathrm{Sup}(v)] = \mathbb{E}[X] + \mathbb{E}[Y_1 + Y_2]
\end{equation}
Since $X \sim \mathrm{Binomial}(n_v, p)$ has fixed parameters, its expectation follows directly: $ \mathbb{E}[X] = n_v \cdot p$. For $\mathbb{E}[Y_1 + Y_2]$, the situation is more involved because $Y_1$ and $Y_2$ have parameters that depend on $n_c$, which is itself a random variable. We therefore apply the law of total expectation, conditioning on $n_c$:
\begin{equation}
    \mathbb{E}[Y_1 + Y_2] = \mathbb{E}_{n_c}\!\left[\mathbb{E}[Y_1 + Y_2 \mid n_c]\right]
\end{equation}
We compute the inner expectation, treating $n_c$ as fixed. By linearity of expectation and the means of the two conditional Binomials:
\begin{align}
    \mathbb{E}[Y_1 + Y_2 \mid n_c] &= \mathbb{E}[Y_1 \mid n_c] + \mathbb{E}[Y_2 \mid n_c] = n_c \cdot p + (n - n_v - n_c) \cdot q \\
    &= n_c \cdot p + (n - n_v) \cdot q - n_c \cdot q = n_c(p - q) + (n - n_v) \cdot q \label{eq:two-term-over-nc}
\end{align}
We take the outer expectation over $n_c$. Since $(n - n_v) \cdot q$ contains no randomness, it passes through the expectation unchanged. Then, we substitute $\mathbb{E}[n_c] = \frac{n - n_v}{g}$, which follows from $n_c \sim \mathrm{Binomial}(n - n_v, \frac{1}{g})$:
\begin{align}
    \mathbb{E}[Y_1 + Y_2] &= \mathbb{E}[n_c](p - q) + (n - n_v) \cdot q \\
    &= \frac{n - n_v}{g}(p - q) + (n - n_v) \cdot q = (n - n_v)\!\left(\frac{p - q}{g} + q\right)
\end{align}
After substituting $p = \frac{e^\varepsilon}{e^\varepsilon + g - 1}$ and $q = \frac{1}{e^\varepsilon + g - 1}$, we find: $\left(\frac{p - q}{g} + q\right) = \frac{1}{g}$ for the rightmost term. Substituting back, we get: $\mathbb{E}[Y_1 + Y_2] = (n - n_v) \cdot \frac{1}{g}$. Combining with $\mathbb{E}[X] = n_v \cdot p$, we obtain:
\begin{equation}
    \mathbb{E}[\mathrm{Sup}(v)] = n_v \cdot p + (n - n_v) \cdot \frac{1}{g}
\end{equation}
which is identical to $\mathbb{E}[\mathrm{Sup}(v)]$ in 2-Binom. The rest of the proof is exactly the same as in Theorem \ref{thm:unbiased}, i.e., substitute $\mathbb{E}[\mathrm{Sup}(v)]$ into the estimator (Eq.~\ref{eq:24}) and simplify to arrive at $\mathbb{E}[\hat{f}_v] = f_v$.
\end{proof}

\begin{theorem}[Variance]
\label{thm:variance-three}
The variance of the frequency estimates $\hat{f}_v$ produced by Algorithm \ref{alg:3binom} is identical to that of Algorithm \ref{alg:2binom}, given by:
\begin{equation}
\mathrm{Var}[\hat{f}_v] = \frac{\frac{1}{g}\left(1 - \frac{1}{g}\right)}{n\left(p - \frac{1}{g}\right)^2} + \frac{f_v\left(1 - p - \frac{1}{g}\right)}{n\left(p - \frac{1}{g}\right)}
\end{equation}
\end{theorem}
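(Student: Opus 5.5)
The plan is to show that $\mathrm{Var}[\mathrm{Sup}(v)]$ under Algorithm \ref{alg:3binom} equals $n_v p(1-p) + (n-n_v)\frac{1}{g}(1-\frac{1}{g})$, the quantity obtained for 2-Binom. Everything after that point in the proof of Theorem~\ref{thm:variance} then transfers verbatim. The scaling step stays the same: $\mathrm{Var}[\hat{f}_v] = \mathrm{Var}[\mathrm{Sup}(v)]/(n^2(p-1/g)^2)$. Since $X$ is drawn independently of $n_c, Y_1, Y_2$, we also have $\mathrm{Var}[\mathrm{Sup}(v)] = \mathrm{Var}[X] + \mathrm{Var}[Y_1+Y_2]$ with $\mathrm{Var}[X] = n_v p(1-p)$. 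The whole task therefore reduces to showing $\mathrm{Var}[Y_1+Y_2] = (n-n_v)\frac{1}{g}(1-\frac{1}{g})$.

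\textbf{Primary route (distributional).} I would first try to prove the stronger claim that $Y_1+Y_2 \sim \mathrm{Binomial}(n-n_v, 1/g)$ exactly. The hierarchical sampling can be realized user by user. Each of the $n-n_v$ Group~2 users independently collides with probability $1/g$, and then contributes $1$ with probability $p$ if it collided and with probability $q$ otherwise. Hence each user contributes an independent Bernoulli with success probability $\frac{p}{g} + \frac{(g-1)q}{g} = \frac{1}{g}$, which is the computation already done in Section~\ref{sec:insight}.

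To make this rigorous rather than heuristic, I would compare probability generating functions. Conditioned on $n_c = k$, the PGF of $Y_1+Y_2$ is $(1-p+pz)^k(1-q+qz)^{m-k}$, where $m=n-n_v$. Averaging over $k \sim \mathrm{Binomial}(m,1/g)$ by the binomial theorem gives
\[
\Bigl(\tfrac{1}{g}(1-p+pz) + \tfrac{g-1}{g}(1-q+qz)\Bigr)^m = \Bigl(1-\tfrac{1}{g}+\tfrac{1}{g}z\Bigr)^m,
\]
using $p+(g-1)q=1$. This settles the distribution, and hence the variance.

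\textbf{Fallback route (law of total variance).} To keep the style of the unbiasedness proof, I would also present the computation through $\mathrm{Var}[Y_1+Y_2] = \mathbb{E}[\mathrm{Var}(Y_1+Y_2\mid n_c)] + \mathrm{Var}(\mathbb{E}[Y_1+Y_2\mid n_c])$. The second term follows from Eq.~\ref{eq:two-term-over-nc}: it equals $(p-q)^2\mathrm{Var}[n_c] = (p-q)^2 m\frac{1}{g}(1-\frac{1}{g})$. For the first term, conditional independence of $Y_1$ and $Y_2$ given $n_c$ gives $\mathrm{Var}(Y_1+Y_2\mid n_c) = n_c p(1-p) + (m-n_c)q(1-q)$. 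Taking expectations yields $\frac{m}{g}p(1-p) + \frac{m(g-1)}{g}q(1-q)$.

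\textbf{Main obstacle.} The hardest step is the algebra showing that these two pieces sum to $m\frac{1}{g}(1-\frac{1}{g})$. I expect to handle it by writing everything in terms of the second moment. Because $\frac{p}{g} + \frac{(g-1)q}{g} = \frac{1}{g}$, the first term equals $m\bigl[\frac{1}{g} - \frac{p^2+(g-1)q^2}{g}\bigr]$. Then $\frac{p^2+(g-1)q^2}{g} - \frac{(p-q)^2(g-1)}{g^2}$ should collapse to $\frac{1}{g^2}$. This identity follows from expanding $(p+(g-1)q)^2 = 1$: it is just the statement that the variance of a two-point mixture equals its mean-squared deviation. With that identity verified, the result matches 2-Binom, and the stated formula follows exactly as in Theorem~\ref{thm:variance}.
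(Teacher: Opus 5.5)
Your proposal is correct, and your fallback route is exactly the paper's proof. The paper conditions on $n_c$ and applies the law of total variance to get $\frac{n-n_v}{g}\bigl[p(1-p)+(g-1)q(1-q)+(p-q)^2(1-\frac{1}{g})\bigr]$. It then collapses the bracket to $1-\frac{1}{g}$ in the appendix, by substituting $p=e^\varepsilon/\beta$ and $q=1/\beta$ with $\beta=e^\varepsilon+g-1$ and spotting the perfect square $\beta^2$. Your reduction to $(p+(g-1)q)^2=1$ is the same identity in structural form, and it makes clear that only the normalization $p+(g-1)q=1$ is used.

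Your primary route is genuinely different. The PGF computation
$\mathbb{E}[z^{Y_1+Y_2}]=\bigl(\frac{1}{g}(1-p+pz)+\frac{g-1}{g}(1-q+qz)\bigr)^{m}=\bigl(1-\frac{1}{g}+\frac{z}{g}\bigr)^{m}$
relies on $Y_1$ and $Y_2$ being drawn independently given $n_c$, which Algorithm~3 does. It proves $Y_1+Y_2\sim\mathrm{Binomial}(n-n_v,\frac{1}{g})$ exactly, so $\mathrm{Sup}(v)$ has the same law under 3-Binom as under 2-Binom.

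This gives more than the theorem asks for. Unbiasedness, the variance, and every higher moment follow at once with no bracket algebra. Moreover, both algorithms draw independently across values $v$, so their entire outputs are identically distributed. That is a rigorous form of the ``statistically equivalent'' claim, which the paper supports only by matching the first two moments. In exchange, the paper's moment-by-moment computation parallels its unbiasedness proof and makes the two layers of randomness explicit, without having to identify the full distribution.
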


\begin{proof}
We compute the variance of the estimator on line 12 of Algorithm \ref{alg:3binom}. Since $n$ and $g$ are constants:
\begin{equation}
    \mathrm{Var}[\hat{f}_v] = \frac{\mathrm{Var}[\mathrm{Sup}(v)]}{n^2 \cdot (p - 1/g)^2}
\end{equation}
Since $X$ is independent of $Y_1$ and $Y_2$, the variance of $\mathrm{Sup}(v) = X + Y_1 + Y_2$ separates as: $\mathrm{Var}[\mathrm{Sup}(v)] = \mathrm{Var}[X] + \mathrm{Var}[Y_1 + Y_2]$. The first term follows directly from $X \sim \mathrm{Binomial}(n_v, p)$: $\mathrm{Var}[X] = n_v \cdot p \cdot (1-p)$. For the second term, since $Y_1$ and $Y_2$ share the random parameter $n_c$, they are not independent of each other and their variance cannot be computed by simple summation. We instead apply the law of total variance, conditioning on $n_c$:
\begin{equation}
    \mathrm{Var}[Y_1 + Y_2] = \mathbb{E}_{n_c}[\mathrm{Var}[Y_1 + Y_2 \mid n_c]] + \mathrm{Var}_{n_c}[\mathbb{E}[Y_1 + Y_2 \mid n_c]]
\end{equation}
For the first component, $Y_1$ and $Y_2$ are independent conditional on $n_c$, so their variances sum directly:
\begin{equation}
    \mathrm{Var}[Y_1 + Y_2 \mid n_c] = n_c \cdot p(1-p) + (n - n_v - n_c) \cdot q(1-q)
\end{equation}
Taking the expectation over $n_c$ and substituting $\mathbb{E}[n_c] = \frac{n-n_v}{g}$:
\begin{equation}
    \mathbb{E}_{n_c}[\mathrm{Var}[Y_1 + Y_2 \mid n_c]] = \frac{n-n_v}{g}\left[p(1-p) + (g-1)q(1-q)\right]
\end{equation}
For the second component, from the unbiasedness proof we have $\mathbb{E}[Y_1 + Y_2 \mid n_c] = n_c \cdot (p-q) + (n-n_v) \cdot q$ as can be found in Eq.~\ref{eq:two-term-over-nc}. Since $(n-n_v) \cdot q$ is a constant with respect to $n_c$:
\begin{equation}
    \mathrm{Var}_{n_c}[\mathbb{E}[Y_1 + Y_2 \mid n_c]] = (p-q)^2 \cdot \mathrm{Var}[n_c] = (p-q)^2 \cdot (n-n_v) \cdot \frac{1}{g}\left(1 - \frac{1}{g}\right)
\end{equation}
Combining both components:
\begin{equation}
    \mathrm{Var}[Y_1 + Y_2] = \frac{n-n_v}{g}\left[p(1-p) + (g-1)q(1-q) + (p-q)^2\left(1-\frac{1}{g}\right)\right]
\end{equation}
Substituting $p = \frac{e^\varepsilon}{e^\varepsilon+g-1}$ and $q = \frac{1}{e^\varepsilon+g-1}$ and simplifying the bracket (see Appendix~\ref{appendix:bracket}) yields:
\begin{equation}
    \mathrm{Var}[Y_1 + Y_2] = (n - n_v) \cdot \frac{1}{g} \cdot \left(1 - \frac{1}{g}\right)
\end{equation}
This is identical to $\mathrm{Var}[Y]$ in the 2-Binom proof (see Eq.~\ref{eq:emre}). Thus, $\mathrm{Var}[\mathrm{Sup}(v)]$ is identical in 2-Binom and 3-Binom. Consequently, the remainder of the proof is exactly as in Theorem \ref{thm:variance}, yielding the same variance result.
\end{proof}
 
\section{Experimental Evaluation}

In this section, we present the experimental evaluation of our Binomial-based simulation methodologies (2-Binom and 3-Binom) by comparing them against existing OLH simulations (Algorithm \ref{algo:olh}). Our evaluation has two main goals: (i) to verify that the outputs of 2-Binom and 3-Binom are statistically indistinguishable from those of existing OLH simulations, and (ii) to demonstrate that our methods achieve substantial execution time improvements.

\subsection{Experiment Setup}

All algorithms were implemented in Python. Simulations were performed on an Intel Core i9-14900K CPU with 64 GB RAM, running Ubuntu 22.04.5 LTS. We used four real-world datasets that are frequently used in the LDP literature:
 
\textbf{Adult} contains individuals' census-related information\footnote{\url{https://archive.ics.uci.edu/dataset/2/adult}}. We used the age attribute as users' values, which range from 17 to 90, yielding a domain of size $d = 74$. The dataset contains $n = 45{,}222$ records.

\textbf{MSNBC} consists of browsing activity logs of users who visited \url{msnbc.com} on September 28, 1999\footnote{\url{https://archive.ics.uci.edu/dataset/133/msnbc+com+anonymous+web+data}}. Each record corresponds to a sequence of page visits for a user, where each visit is represented by a content category (e.g., news, technology, weather, sports). The domain size is $d = 17$ categories. For each user, we consider only the first entry in their sequence (i.e., the first visited category) as their $v_u$. The population size is $n = 989{,}818$.

\textbf{Kosarak} is a clickstream dataset derived from a Hungarian online news portal\footnote{\url{https://www.philippe-fournier-viger.com/spmf/index.php?link=datasets.php}}. Each record represents the sequence of URLs visited by a user. Due to the sparsity of many URLs, we restricted the domain to $d = 128$ most frequently visited URLs. For users with multiple URLs, the URL with the highest frequency in their sequence was selected as their $v_u$. 

\textbf{BMS-POS} comprises market basket transaction data collected from a large electronics retailer, containing $n = 500{,}000$ transactions and 1,657 distinct items\footnote{\url{https://github.com/cpearce/HARM/blob/master/datasets/BMS-POS.csv}}. We applied the same preprocessing procedure as in Kosarak, retaining the $d = 256$ most frequently purchased items.

Simulations were conducted for privacy budget values $\varepsilon \in \{0.5,\, 1.0,\, 1.5,\, 2.0,\,$ $2.5,\, 3.0,\, 3.5,\, 4.0\}$. Execution time experiments are repeated 10 times, and the estimation error and variance experiments are repeated 5,000 times to ensure statistical significance.
 
\subsection{Execution Time Experiments}

\textbf{OLH versus 2-Binom and 3-Binom.} We evaluate the execution time of OLH versus 2-Binom and 3-Binom across all four datasets and all $\varepsilon$ values. For each configuration, we measure the total time over 10 simulation repetitions. The results are reported in
Table~\ref{tab:exec_time}. It can be observed that 2-Binom and 3-Binom achieve execution times multiple orders of magnitude lower than OLH, completing every configuration in under 5 milliseconds, while OLH ranges from seconds to several minutes. For example, on the MSNBC dataset, OLH requires between 54 and 63 seconds, whereas 2-Binom and 3-Binom complete the same task in under 0.5 milliseconds. On larger datasets such as BMS-POS and Kosarak, OLH times exceed 3-4 minutes, whereas both Binomial methods still finish in well under a second.

\begin{table}[ht]
\centering
\caption{Total execution times (in seconds) for 10 repetitions across different $\varepsilon$ values.}
\label{tab:exec_time}
\setlength{\tabcolsep}{3pt}
\begin{tabular}{llcccccccc}
\toprule
\textbf{Dataset} & \textbf{Method}
  & $\varepsilon{=}0.5$ & $\varepsilon{=}1.0$ & $\varepsilon{=}1.5$
  & $\varepsilon{=}2.0$ & $\varepsilon{=}2.5$ & $\varepsilon{=}3.0$
  & $\varepsilon{=}3.5$ & $\varepsilon{=}4.0$ \\
\midrule
\multirow{3}{*}{Adult}
  & OLH            & 8.6496 & 8.3656 & 8.1026 & 7.8320 & 7.7127 & 7.6063 & 7.4373 & 7.3598 \\
  & 2-Binom       & 0.0008 & 0.0009 & 0.0008 & 0.0008 & 0.0008 & 0.0008 & 0.0008 & 0.0008 \\
  & 3-Binom  & 0.0012 & 0.0014 & 0.0013 & 0.0013 & 0.0013 & 0.0013 & 0.0012 & 0.0013 \\
\midrule
\multirow{3}{*}{MSNBC}
  & OLH            & 63.037 & 60.954 & 59.021 & 57.635 & 56.345 & 55.832 & 54.491 & 54.867 \\
  & 2-Binom       & 0.0004 & 0.0004 & 0.0004 & 0.0004 & 0.0004 & 0.0004 & 0.0004 & 0.0004 \\
  & 3-Binom  & 0.0004 & 0.0004 & 0.0004 & 0.0004 & 0.0004 & 0.0004 & 0.0004 & 0.0004 \\
\midrule
\multirow{3}{*}{Kosarak}
  & OLH            & 159.75 & 151.31 & 147.21 & 141.59 & 138.33 & 135.33 & 133.87 & 133.36 \\
  & 2-Binom       & 0.0015 & 0.0015 & 0.0015 & 0.0015 & 0.0015 & 0.0015 & 0.0015 & 0.0015 \\
  & 3-Binom  & 0.0018 & 0.0019 & 0.0019 & 0.0019 & 0.0019 & 0.0019 & 0.0018 & 0.0019 \\
\midrule
\multirow{3}{*}{BMS-POS}
  & OLH            & 304.37 & 289.84 & 284.95 & 274.70 & 265.88 & 262.16 & 257.32 & 252.03 \\
  & 2-Binom       & 0.0025 & 0.0025 & 0.0029 & 0.0025 & 0.0025 & 0.0027 & 0.0025 & 0.0025 \\
  & 3-Binom  & 0.0036 & 0.0036 & 0.0041 & 0.0036 & 0.0036 & 0.0036 & 0.0036 & 0.0036 \\
\bottomrule
\end{tabular}
\end{table}

The results are consistent with the algorithms' computational complexities. OLH exhibits an $O(nd)$ cost, causing execution time to increase with both the number of users and the domain size. In particular, datasets with large domains, such as BMS-POS and Kosarak, incur substantially higher runtimes than MSNBC despite MSNBC having the largest population. In contrast, both Binomial methods have lower complexity. It should be noted that 3-Binom incurs a small additional time overhead compared to 2-Binom due to the extra Binomial draws per $v \in \mathcal{D}$, but the effect of this is very small. Overall, these results demonstrate the substantial efficiency advantage of 2-Binom and 3-Binom over existing OLH simulations.

\textbf{Impact of parallelizing OLH.} The previous experiments used the standard single-threaded OLH simulations as the baseline. To understand whether the gap between OLH and our methods can be closed by parallelizing the OLH simulations, we implemented a multi-threaded version of Algorithm \ref{alg:olh} in which the user population is partitioned into $\eta$ disjoint chunks, and each chunk is assigned to a different CPU thread. Each thread independently computes $\mathit{Sup}(v)$ and $\hat{f}_v$ for its population subset, and the results are combined at the end. We report execution times with $\eta \in \{1, 2, 4, 8, 16\}$ and $\varepsilon = 1$ over 10 repetitions in Table \ref{tab:parallel}, where 2-Binom and 3-Binom remain single-threaded.

We observe that parallelization reduces OLH's execution time, but its benefit diminishes as more threads are added, and the reduction is not sufficient to close the gap between OLH versus 2-Binom and 3-Binom. Even at $\eta = 16$, OLH still requires several seconds to tens of seconds on every dataset, whereas single-threaded 2-Binom and 3-Binom complete in well under a millisecond throughout. Our methods therefore remain several orders of magnitude faster than OLH at every thread count we tested.

\begin{table}[!t]
\centering
\caption{Total execution times (in seconds) for 10 repetitions at $\varepsilon = 1.0$, comparing 2-Binom and 3-Binom against parallel OLH implementations across a varying number of threads $\eta$. The 2-Binom and 3-Binom methods are single-threaded.}
\label{tab:parallel}
\setlength{\tabcolsep}{5pt}
\begin{tabular}{lrrrrrcc}
\toprule
& \multicolumn{5}{c}{\textbf{OLH (parallel, $\eta$ threads)}} & \multicolumn{2}{c}{\textbf{Binomial}} \\
\cmidrule(lr){2-6} \cmidrule(lr){7-8}
\textbf{Dataset}
  & $\eta$ = 1 & $\eta$ = 2 & $\eta$ = 4 & $\eta$ = 8 & $\eta$ = 16
  & \textbf{2-Binom} & \textbf{3-Binom} \\
\midrule
Adult   & 8.5531 & 4.3803 & 2.6380 & 1.7852 & 1.4543 & 0.0005 & 0.0010 \\
MSNBC   & 62.739 & 34.907 & 20.468 & 14.625 & 12.581 & 0.0002 & 0.0003 \\
Kosarak & 155.14 & 78.902 & 48.702 & 32.588 & 25.403 & 0.0009 & 0.0017 \\
BMS-POS & 304.15 & 154.50 & 91.823 & 65.495 & 52.596 & 0.0018 & 0.0034 \\
\bottomrule
\end{tabular}
\end{table}

\textbf{Variability of execution times.} To assess run-to-run variability and verify the stability of the reported speedups, we report the standard deviations of execution times (in addition to the means) across 10 experiment repetitions. Table \ref{tab:variability} reports the results for Adult and BMS-POS, which are the two datasets with smallest and largest execution times according to our previous experiments. MSNBC and Kosarak datasets exhibit similar standard deviation behaviors and are omitted due to space.

We observe that OLH exhibits low standard deviations across all configurations. This stability is intuitive because each OLH run lasts several seconds, which reduces the impact of transient system effects. In contrast, 2-Binom and 3-Binom exhibit relatively larger standard deviations compared to their means. Our inspection of individual runs showed that this is caused by a small number of isolated slow executions; and we note that in every configuration, the median run time remains within a few percent of the mean. It should also be noted that microsecond-scale system-related impacts can be more pronounced in 2-Binom and 3-Binom, since their execution times are on the order of microseconds. More importantly, considering that the OLH results are reported in seconds whereas 2-Binom and 3-Binom are reported in $\mu$s, the observed standard deviations are far too small to account for the reported gains. The speedups between OLH and the Binomial methods span several orders of magnitude, whereas the standard deviations remain at most a few percent of the mean. This supports the conclusion that the improvements arise from the proposed algorithms rather than from measurement artifacts.


\begin{table}[!t]
\centering
\small
\caption{Execution time per simulation repetition on Adult and BMS-POS, reported as mean $\pm$ standard deviation. OLH times are in seconds; 2-Binom and 3-Binom times are in microseconds.}
\label{tab:timing_variability}
\setlength{\tabcolsep}{3pt}
\begin{tabular}{lcccccc}
\toprule
& \multicolumn{3}{c}{\textbf{Adult}} & \multicolumn{3}{c}{\textbf{BMS-POS}} \\
\cmidrule(lr){2-4} \cmidrule(lr){5-7}
$\boldsymbol{\varepsilon}$
  & \textbf{OLH (s)} & \textbf{2-Binom} & \textbf{3-Binom}
  & \textbf{OLH (s)} & \textbf{2-Binom} & \textbf{3-Binom} \\
  & & \textbf{($\mu$s)} & \textbf{($\mu$s)} & & \textbf{($\mu$s)} & \textbf{($\mu$s)} \\
\midrule
$0.5$ & $0.927 \pm 0.013$ & $56.4 \pm 7.3$ & $104.1 \pm 9.6$ & $31.862 \pm 0.020$ & $179.7 \pm 2.7$ & $334.9 \pm 4.1$ \\
$1.0$ & $0.882 \pm 0.008$ & $54.0 \pm 1.1$ & $100.3 \pm 1.6$ & $30.393 \pm 0.013$ & $179.5 \pm 2.4$ & $334.4 \pm 4.9$ \\
$1.5$ & $0.841 \pm 0.001$ & $53.7 \pm 0.8$ & $100.3 \pm 1.7$ & $29.786 \pm 0.021$ & $187.0 \pm 33.0$ & $338.9 \pm 18.9$ \\
$2.0$ & $0.808 \pm 0.001$ & $55.8 \pm 1.1$ & $101.1 \pm 2.1$ & $28.156 \pm 0.013$ & $176.9 \pm 2.2$ & $333.4 \pm 3.2$ \\
$2.5$ & $0.788 \pm 0.002$ & $54.2 \pm 1.0$ & $101.3 \pm 1.9$ & $27.348 \pm 0.017$ & $178.2 \pm 3.6$ & $336.2 \pm 10.4$ \\
$3.0$ & $0.774 \pm 0.001$ & $54.0 \pm 0.7$ & $100.4 \pm 1.4$ & $27.453 \pm 0.315$ & $182.1 \pm 22.2$ & $347.8 \pm 63.4$ \\
$3.5$ & $0.766 \pm 0.001$ & $54.9 \pm 1.2$ & $101.3 \pm 1.9$ & $26.995 \pm 0.017$ & $177.0 \pm 2.4$ & $334.2 \pm 4.6$ \\
$4.0$ & $0.760 \pm 0.001$ & $54.7 \pm 0.9$ & $102.6 \pm 6.3$ & $26.748 \pm 0.011$ & $178.9 \pm 2.2$ & $339.6 \pm 6.2$ \\
\bottomrule
\end{tabular}
\label{tab:variability}
\end{table}

\subsection{Memory Consumption Experiments}

Beyond execution time, the memory footprint of simulations can also be important. We compare the peak memory of OLH against 2-Binom and 3-Binom across all four datasets and all $\varepsilon$ values, measured using Python's \texttt{tracemalloc} module on a preloaded dataset so that only the simulation phase is captured. The results are given in Table \ref{tab:memory}. Across all datasets and privacy budgets, both Binomial methods consume several orders of magnitude less memory than OLH: every Binomial configuration completes within tens of kilobytes, whereas OLH ranges from a few megabytes on Adult to tens of megabytes on MSNBC. Importantly, the effect of $n$ and $d$ on OLH's memory differs from their effect on its runtime. Algorithm \ref{alg:olh} stores the report $\langle x'_u, H_u \rangle$ of every user in $L$, giving $O(n+d)$ memory which is dominated by $n$. This is visible in the results: Kosarak and BMS-POS consume a near-identical 42.14 MB memory despite their domains differing by a factor of two ($d = 128$ versus $d = 256$), because both contain $n = 500{,}000$ users. On the other hand, MSNBC is the most memory-intensive dataset despite its small $d$, due to its large $n$. 

In contrast, 2-Binom and 3-Binom store only the per-value counts $n_v$ and the support array, both of size $d$, so their memory is $O(d)$ and independent of $n$. Consumption grows only modestly from MSNBC ($d = 17$) to BMS-POS ($d = 256$), even though MSNBC's population is nearly twice as large. Unlike the execution time results, 2-Binom and 3-Binom are indistinguishable in memory, since the extra Binomial draw in 3-Binom does not require much additional storage. All methods are stable across varying $\varepsilon$, confirming that 2-Binom and 3-Binom offer substantial memory benefits compared to OLH, alongside their execution time benefits.

\begin{table}[!t]
\centering
\caption{Memory consumptions (in MB) of simulations across different $\varepsilon$ values.}
\label{tab:mem_consumption}
\setlength{\tabcolsep}{3pt}
\begin{tabular}{llcccccccc}
\toprule
\textbf{Dataset} & \textbf{Method}
  & $\varepsilon{=}0.5$ & $\varepsilon{=}1.0$ & $\varepsilon{=}1.5$
  & $\varepsilon{=}2.0$ & $\varepsilon{=}2.5$ & $\varepsilon{=}3.0$
  & $\varepsilon{=}3.5$ & $\varepsilon{=}4.0$ \\
\midrule
\multirow{3}{*}{Adult}
  & OLH            & 3.8444 & 3.8434 & 3.8434 & 3.8433 & 3.8446 & 3.8435 & 3.8433 & 3.8435 \\
  & 2-Binom       & 0.0157 & 0.0151 & 0.0152 & 0.0158 & 0.0152 & 0.0158 & 0.0150 & 0.0151 \\
  & 3-Binom  & 0.0152 & 0.0157 & 0.0158 & 0.0157 & 0.0157 & 0.0151 & 0.0153 & 0.0157 \\
\midrule
\multirow{3}{*}{MSNBC}
  & OLH            & 83.591 & 83.591 & 83.593 & 83.591 & 83.591 & 83.593 & 83.590 & 83.591 \\
  & 2-Binom       & 0.0150 & 0.0150 & 0.0150 & 0.0152 & 0.0151 & 0.0151 & 0.0153 & 0.0153 \\
  & 3-Binom  & 0.0151 & 0.0150 & 0.0150 & 0.0152 & 0.0152 & 0.0151 & 0.0151 & 0.0153 \\
\midrule
\multirow{3}{*}{Kosarak}
  & OLH            & 42.138 & 42.140 & 42.138 & 42.142 & 42.138 & 42.140 & 42.140 & 42.138 \\
  & 2-Binom       & 0.0161 & 0.0161 & 0.0157 & 0.0157 & 0.0156 & 0.0161 & 0.0161 & 0.0154 \\
  & 3-Binom  & 0.0157 & 0.0160 & 0.0160 & 0.0161 & 0.0161 & 0.0162 & 0.0161 & 0.0161 \\
\midrule
\multirow{3}{*}{BMS-POS}
  & OLH            & 42.143 & 42.139 & 42.143 & 42.139 & 42.139 & 42.139 & 42.143 & 42.139 \\
  & 2-Binom       & 0.0165 & 0.0170 & 0.0171 & 0.0166 & 0.0164 & 0.0170 & 0.0170 & 0.0172 \\
  & 3-Binom  & 0.0166 & 0.0172 & 0.0169 & 0.0171 & 0.0170 & 0.0171 & 0.0165 & 0.0166 \\
\bottomrule
\end{tabular}
\label{tab:memory}
\end{table}

\subsection{Comparison of Estimation Error}

To verify that the outputs of 2-Binom and 3-Binom are statistically indistinguishable from those of existing OLH simulations, we compare their estimation errors across all datasets and privacy budgets. We measure error using $\ell_1$ and $\ell_2$ distances between the estimated frequencies and the true frequencies:
\begin{equation}
    \ell_1 \text{ distance} = \sum_{v \in \mathcal{D}} \lvert f_v - \hat{f}_v \rvert
    \qquad
    \ell_2 \text{ distance} = \sqrt{\sum_{v \in \mathcal{D}} (f_v - \hat{f}_v)^2}
\end{equation}

We report the results in Figure \ref{fig:estimation_error}. It can be observed that the curves for OLH, 2-Binom, and 3-Binom are nearly indistinguishable across all datasets and privacy budgets, indicating that the proposed simulations accurately reproduce the behavior of OLH. For all methods, both $\ell_1$ and $\ell_2$ distances decrease as $\varepsilon$ increases, reflecting the reduction in perturbation noise and the resulting improvement in estimation accuracy. Overall, the proposed methods achieve the same estimation results as OLH while providing the substantial time savings shown in Table \ref{tab:exec_time}.

\begin{figure}[t]
    \centering
    \includegraphics[width=0.24\linewidth]{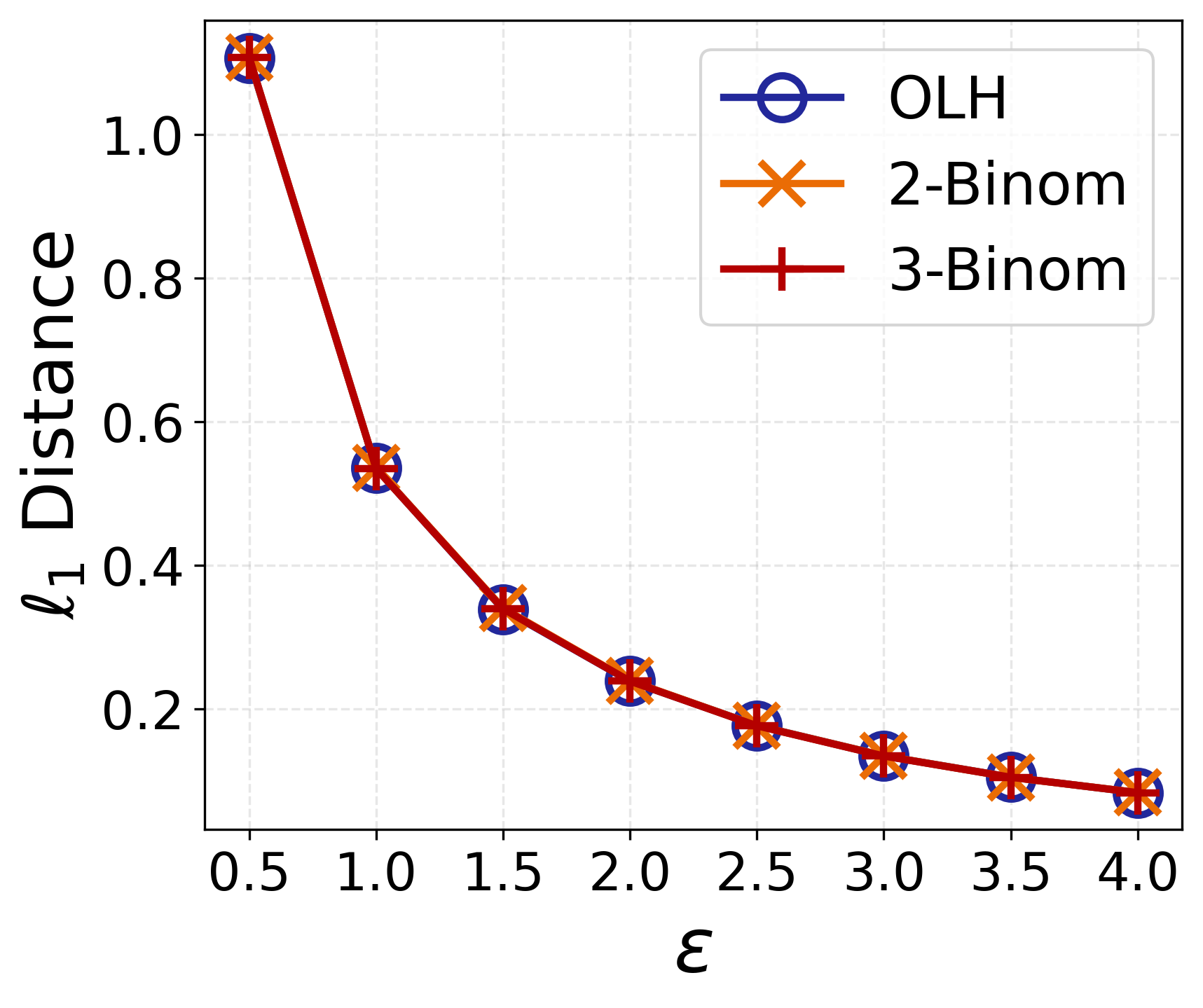}\hfill
    \includegraphics[width=0.25\linewidth]{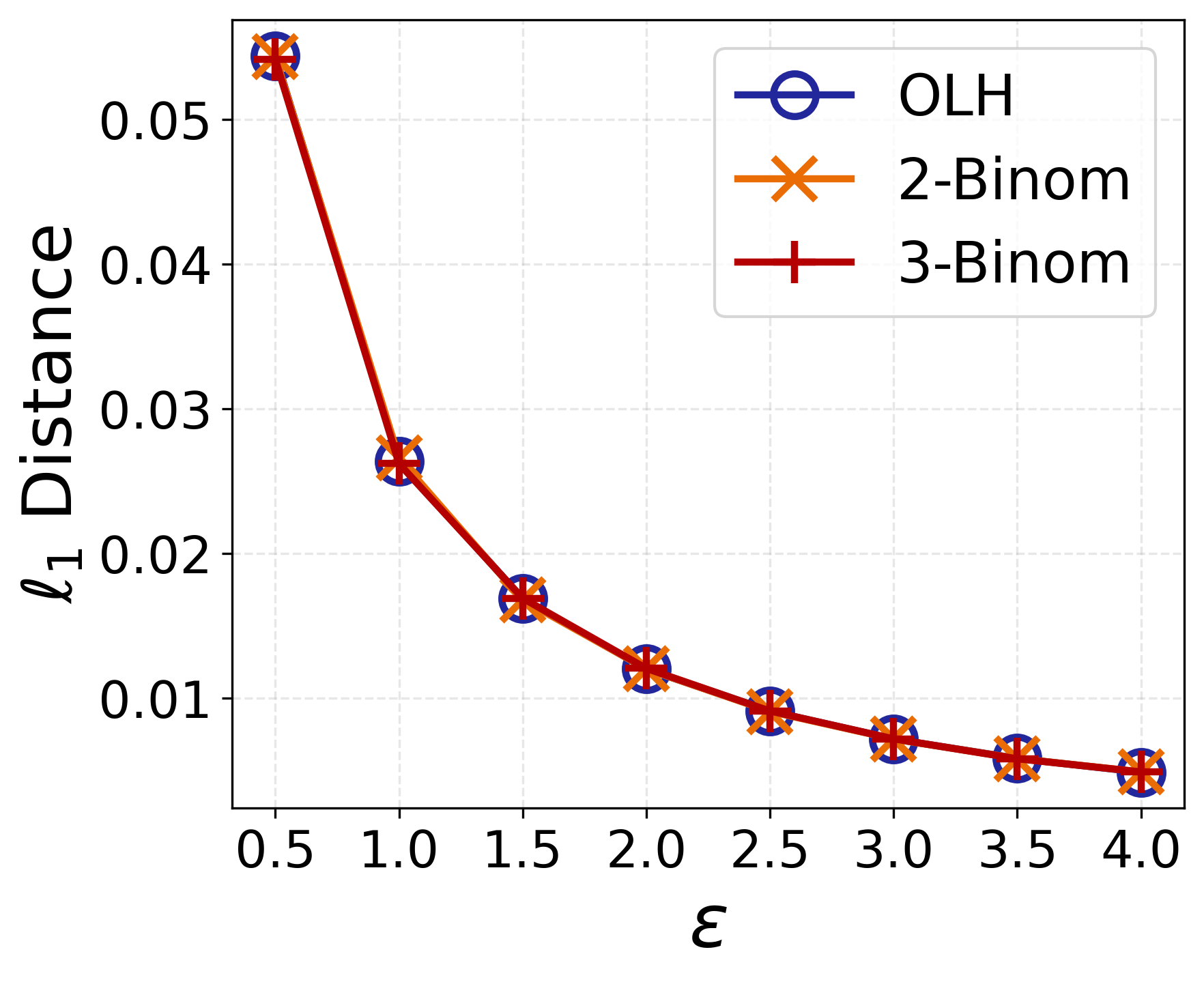}\hfill
    \includegraphics[width=0.24\linewidth]{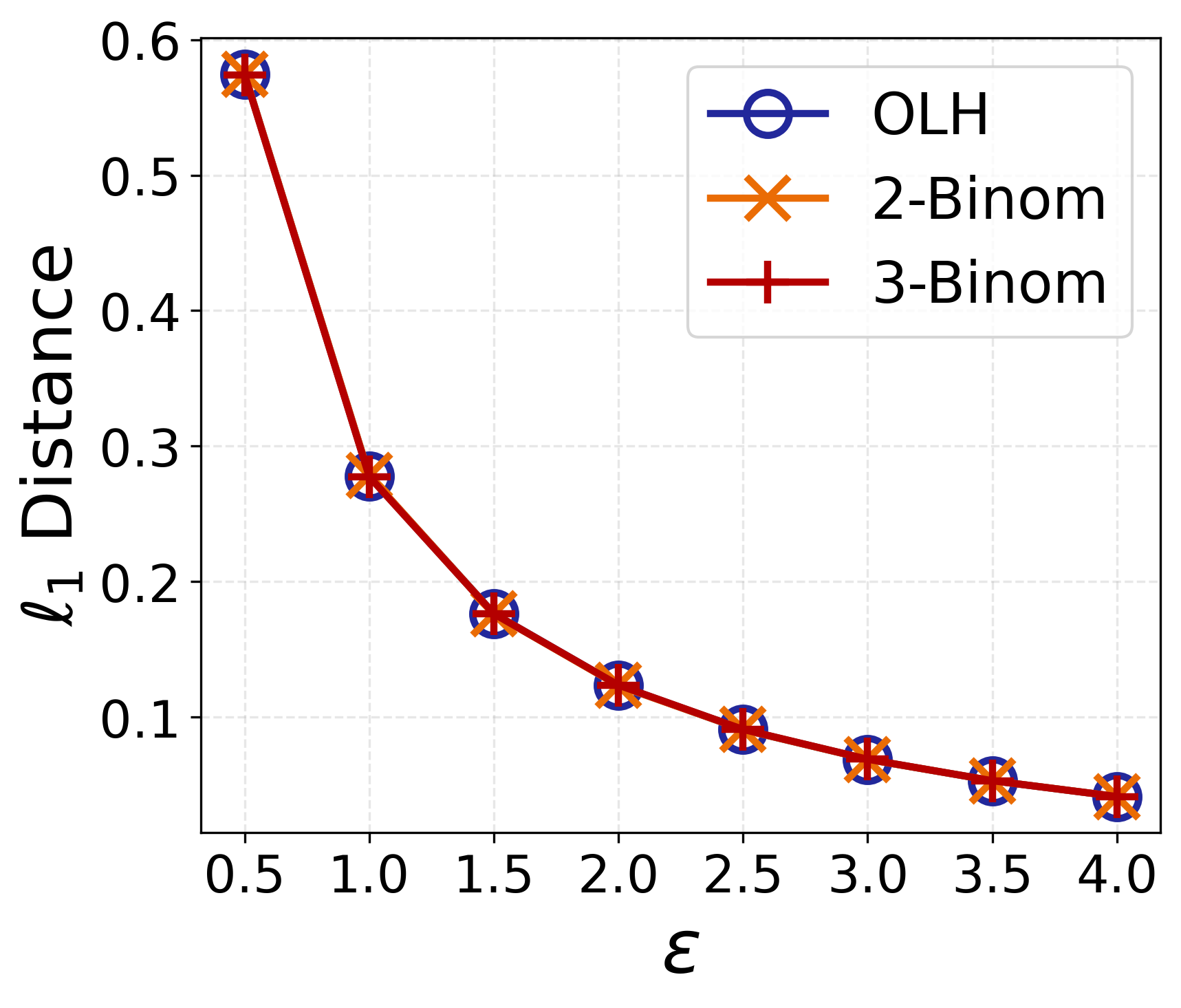}\hfill
    \includegraphics[width=0.24\linewidth]{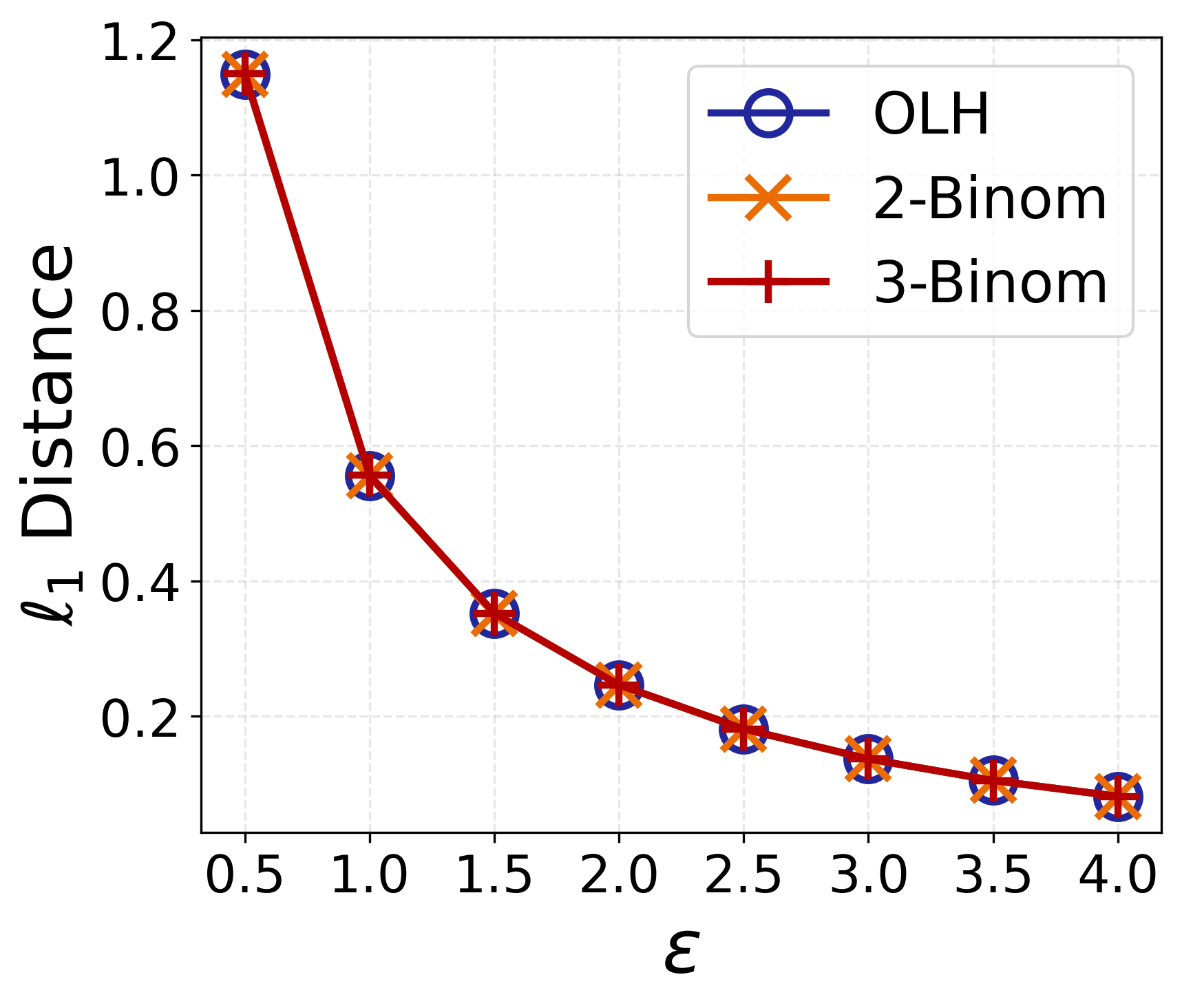}\\[2pt]
    \includegraphics[width=0.24\linewidth]{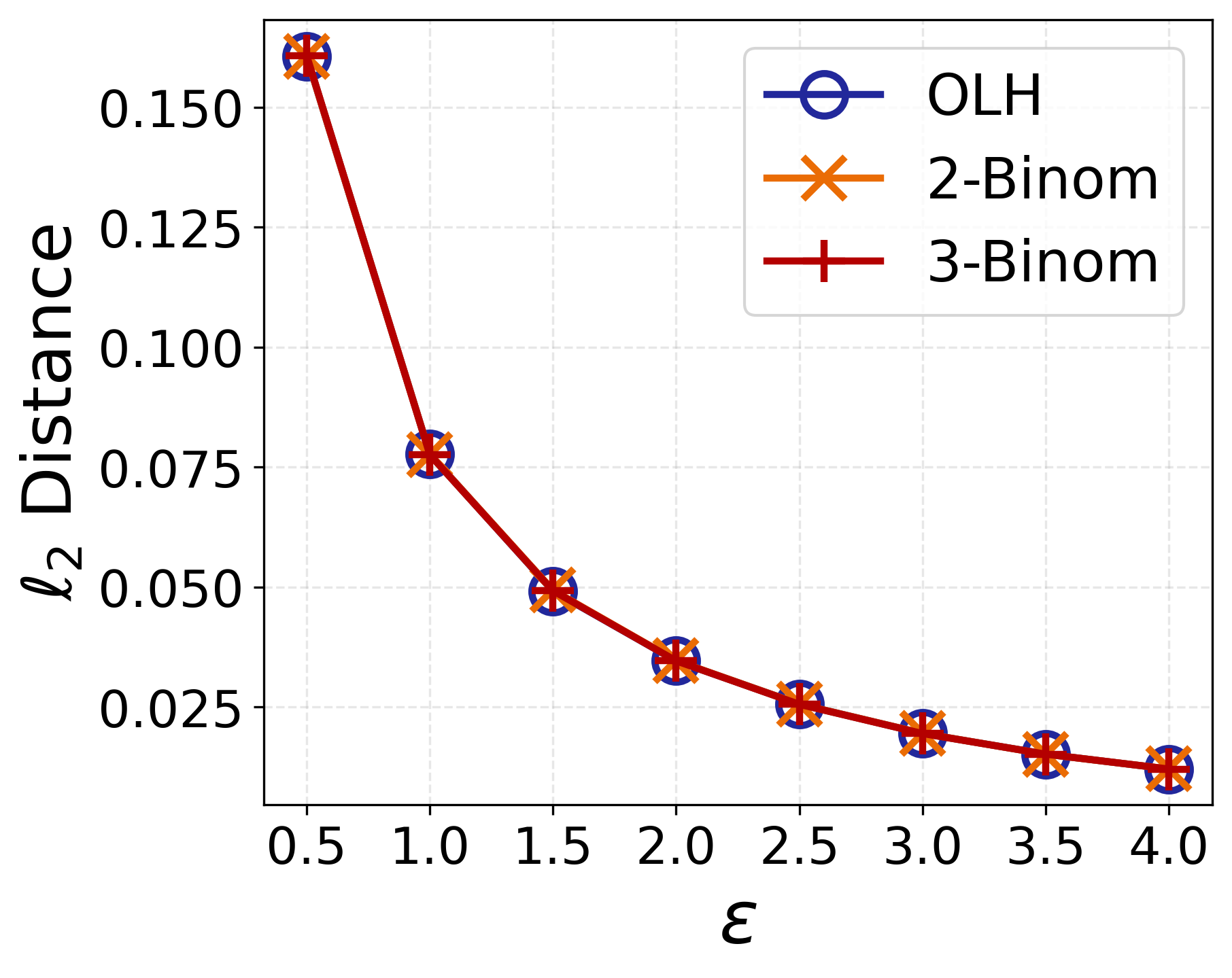}\hfill
    \includegraphics[width=0.25\linewidth]{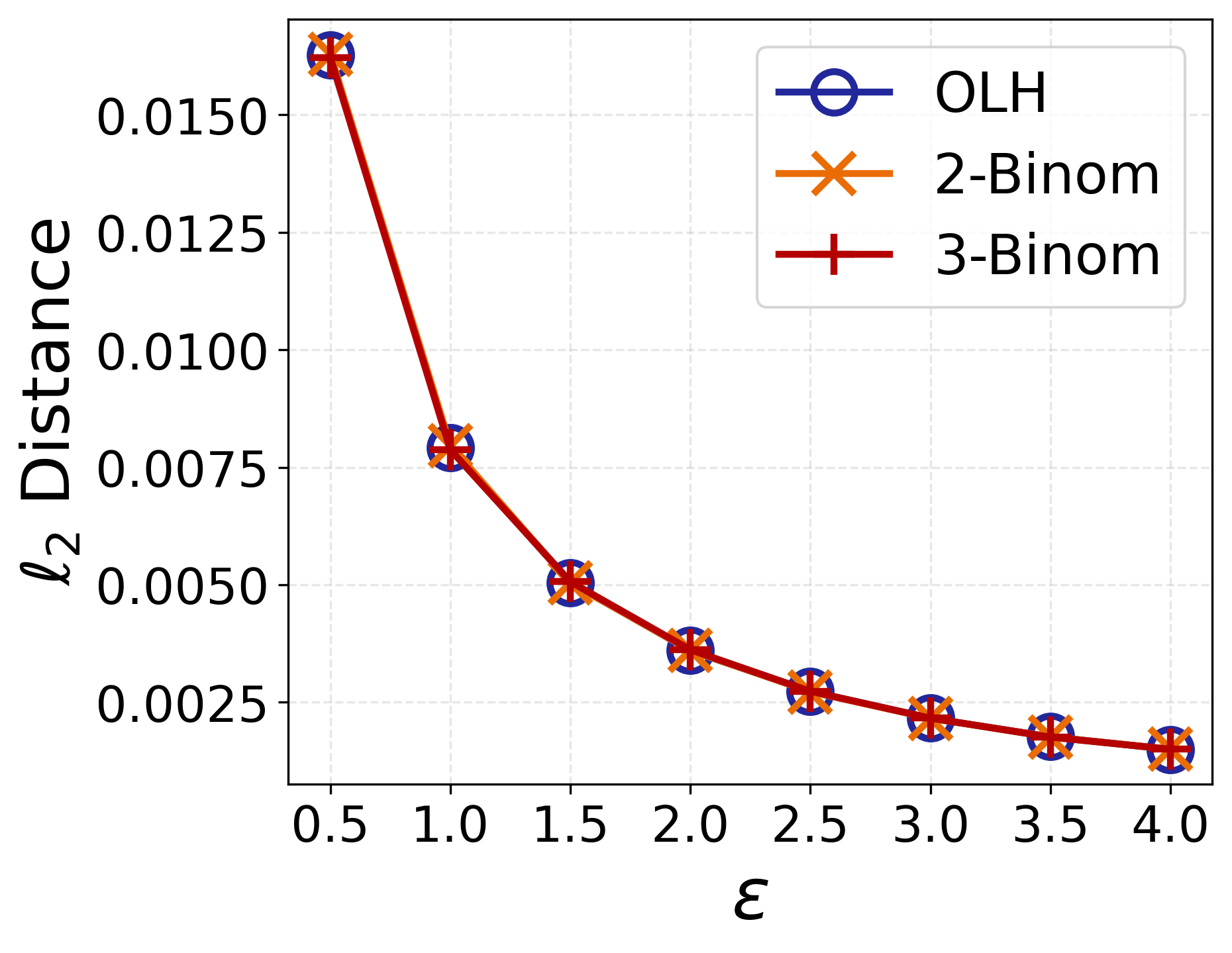}\hfill
    \includegraphics[width=0.24\linewidth]{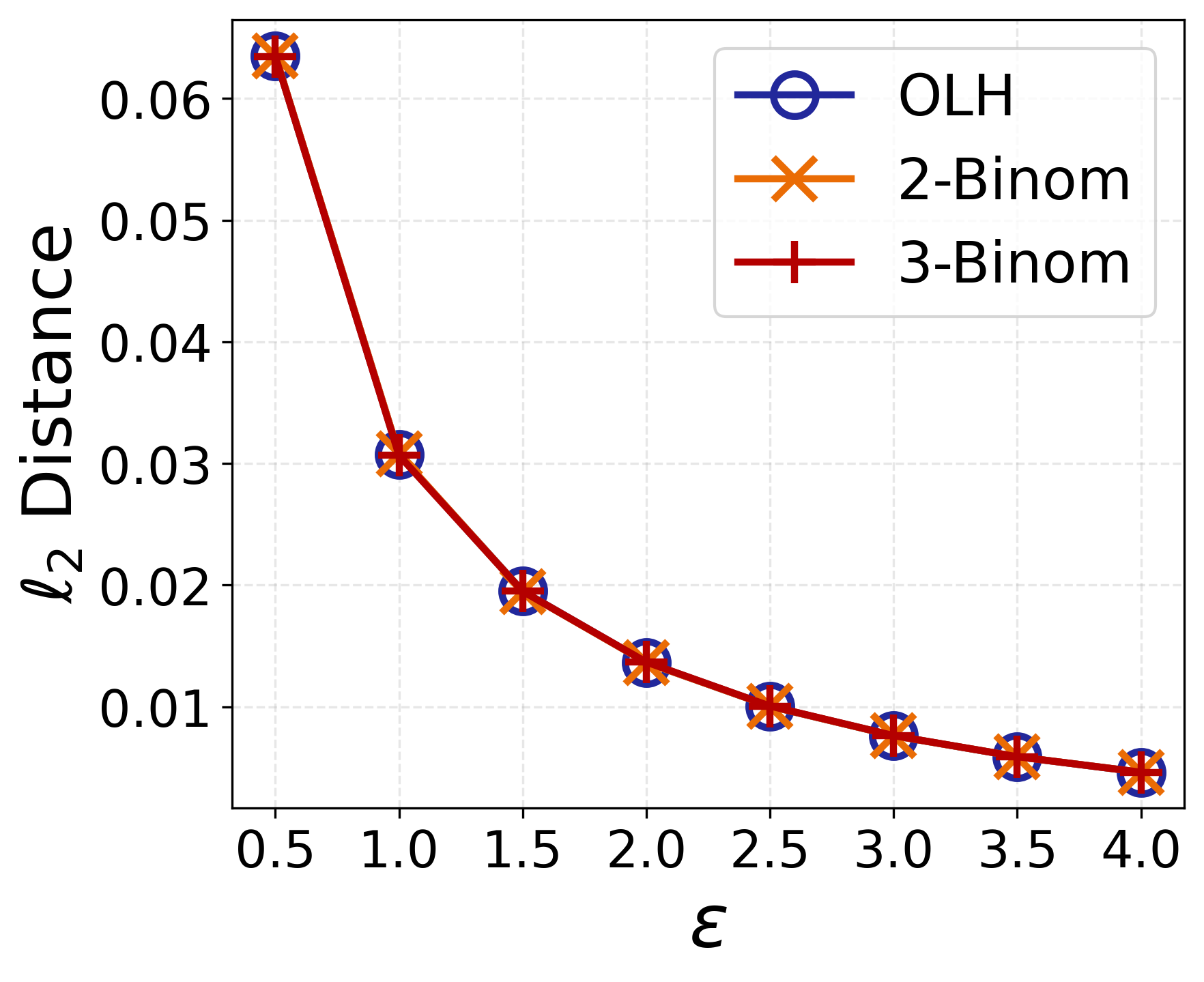}\hfill
    \includegraphics[width=0.24\linewidth]{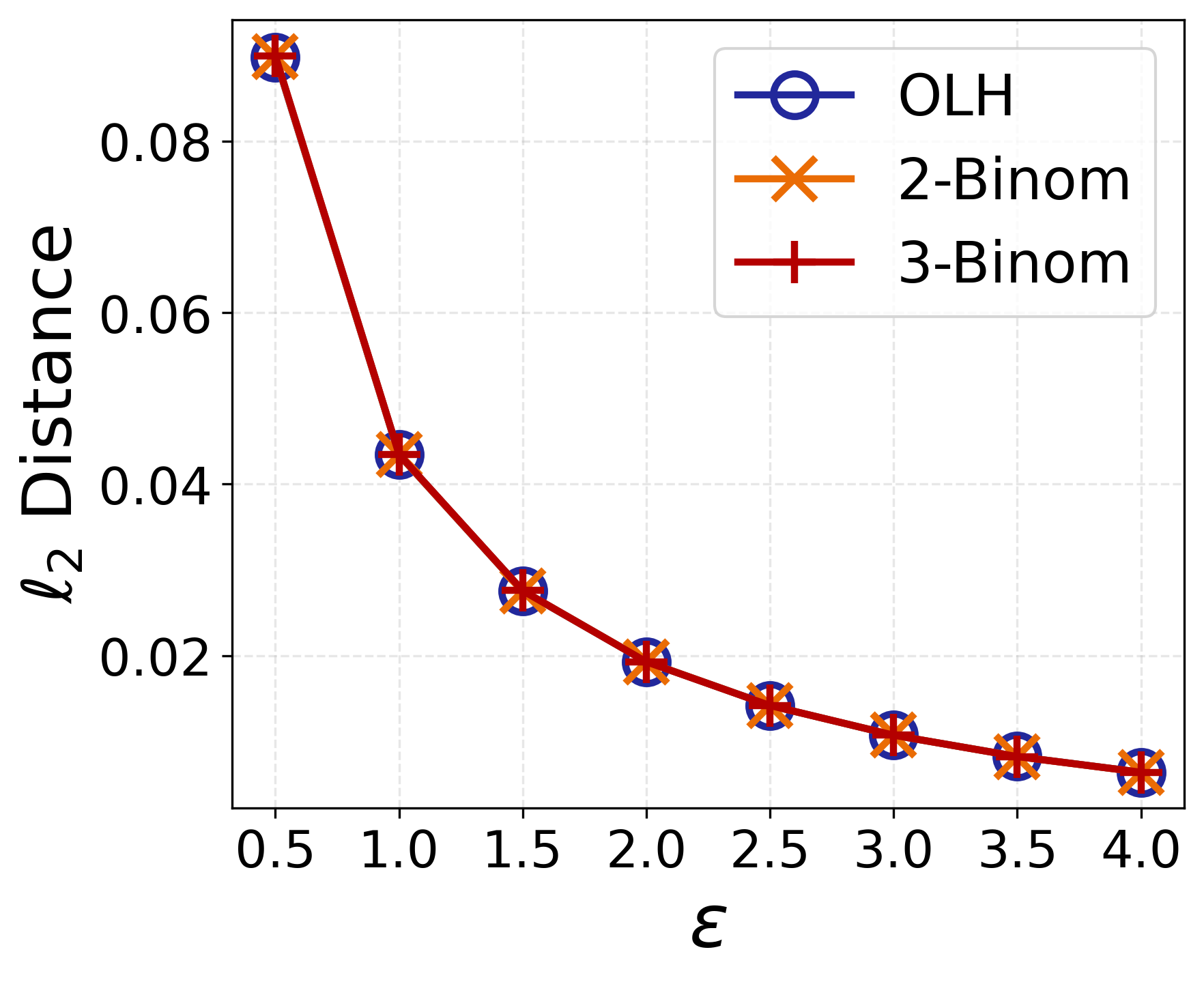}
    \vspace{-8pt}
    \caption{Average $\ell_1$ and $\ell_2$ estimation errors of OLH, 2-Binom, and 3-Binom across varying $\varepsilon$ values. Datasets from left to right: Adult, MSNBC, Kosarak, BMS-POS.}
    \label{fig:estimation_error}
\end{figure}
 
To further examine the distributional similarity of the three methods, we use kernel density estimation (KDE) to visualize the full distribution of $\ell_1$ and $\ell_2$ distances across the 5{,}000 repetitions at $\varepsilon = 2.0$. The resulting density plots are shown in Figure \ref{fig:kde}, with $\ell_1$ distributions in the top row and $\ell_2$ distributions in the bottom row. In all plots, the three density curves for OLH, 2-Binom, and 3-Binom overlap closely. This indicates that the distributions of estimation errors produced by OLH, 2-Binom, and 3-Binom are nearly identical. Furthermore, the KDE plots reveal that the distributions are approximately bell-shaped in all datasets, with the three methods sharing the same center and spread. Any visible separation between the curves is minor and plausible due to the randomized nature of OLH. The tightest agreement is observed on MSNBC ($d = 17$), where the small domain causes each method's per-repetition noise to be highly concentrated, while on BMS-POS and Kosarak the distributions are slightly wider owing to the larger domains. 

\begin{figure}[t]
    \centering
    \includegraphics[width=0.24\linewidth]{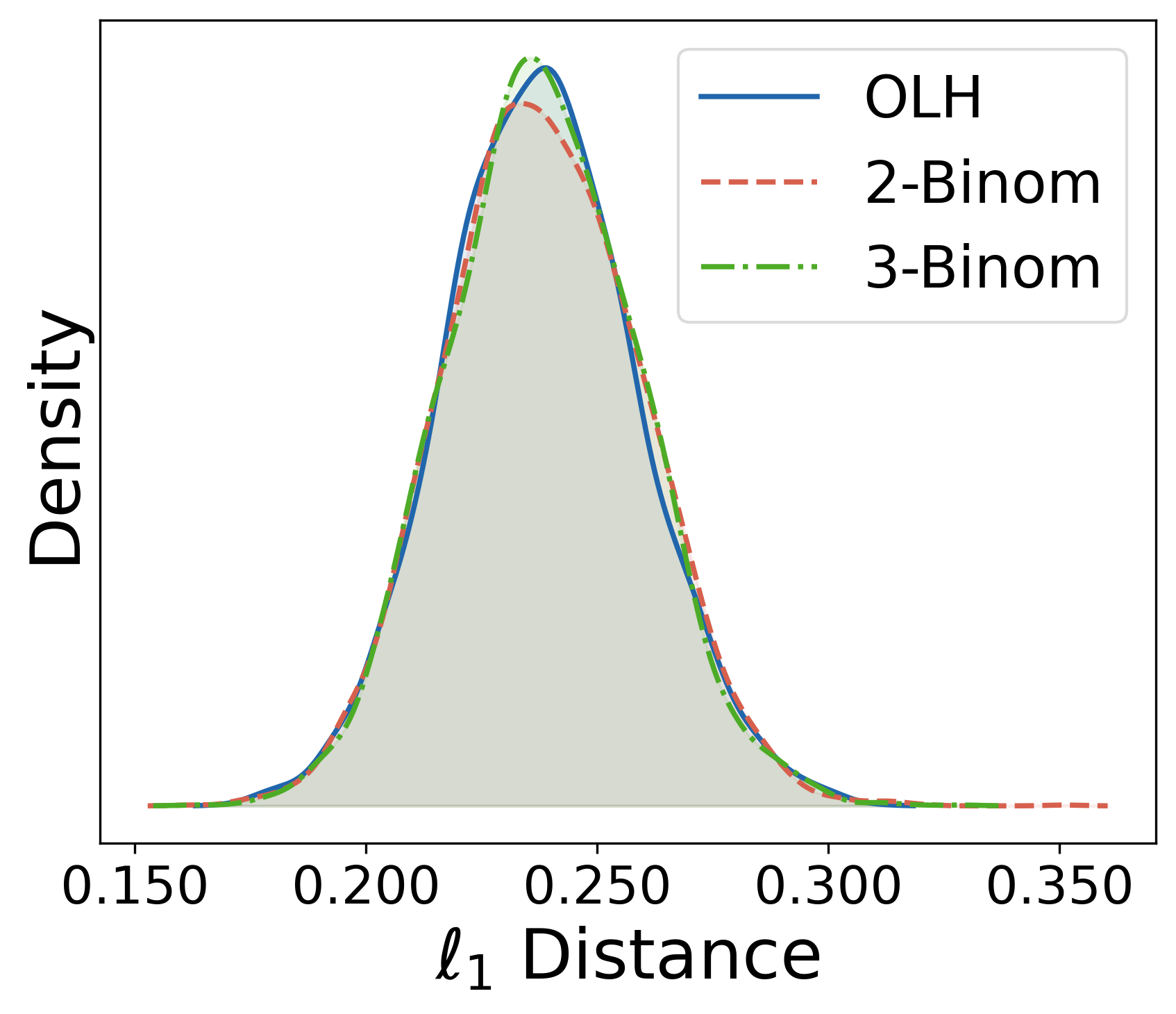}\hfill
    \includegraphics[width=0.24\linewidth]{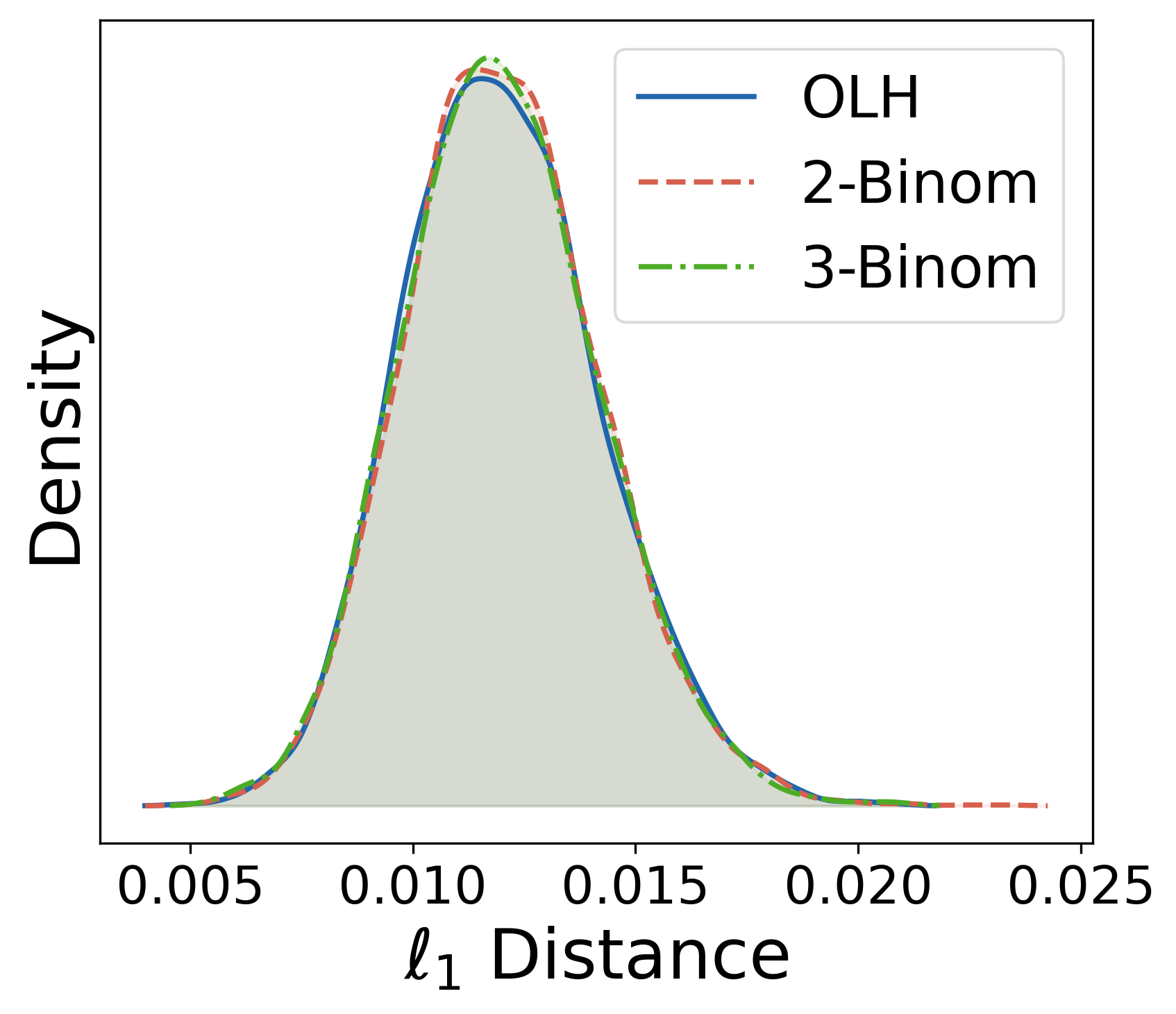}\hfill
    \includegraphics[width=0.24\linewidth]{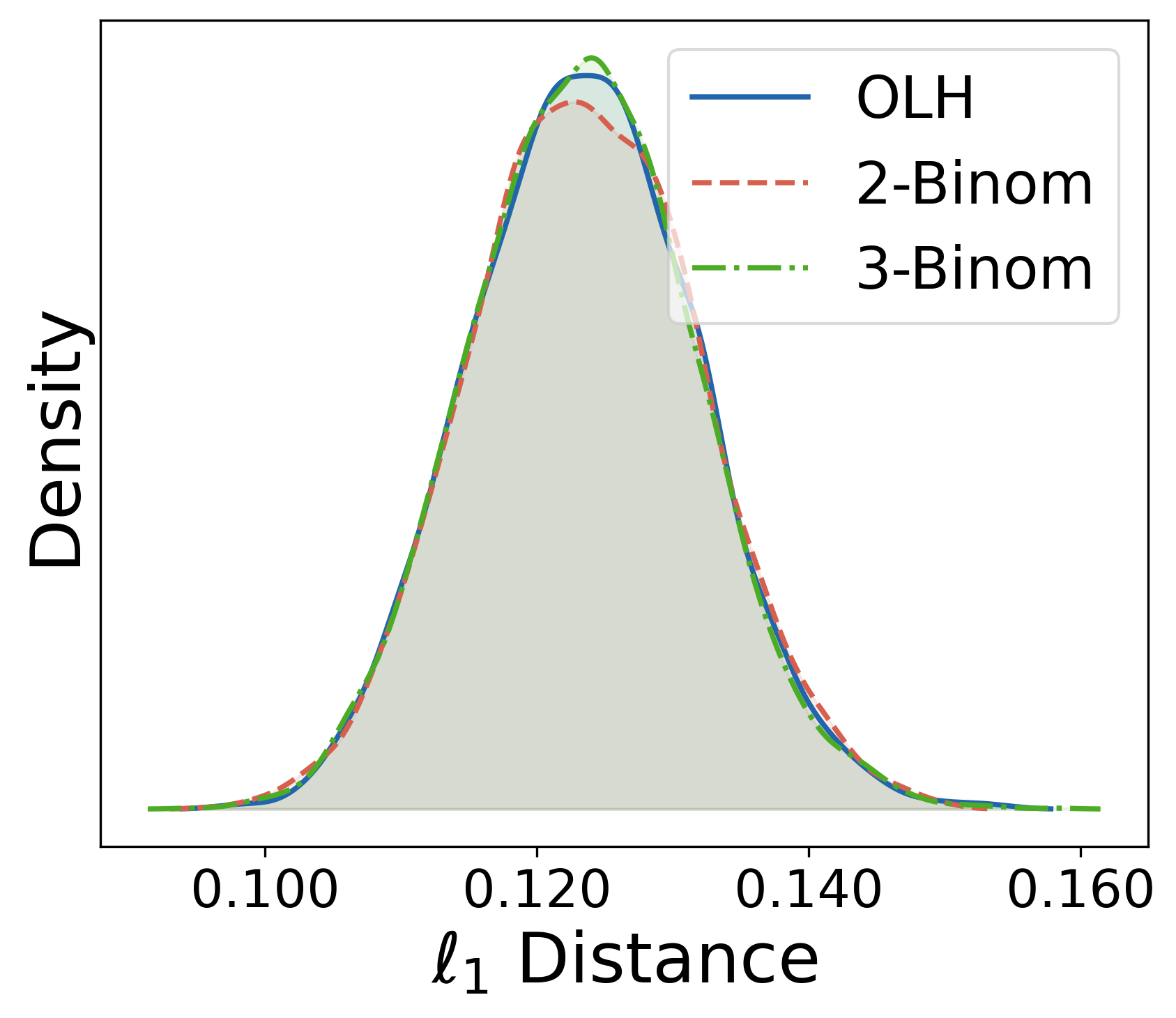}\hfill
    \includegraphics[width=0.24\linewidth]{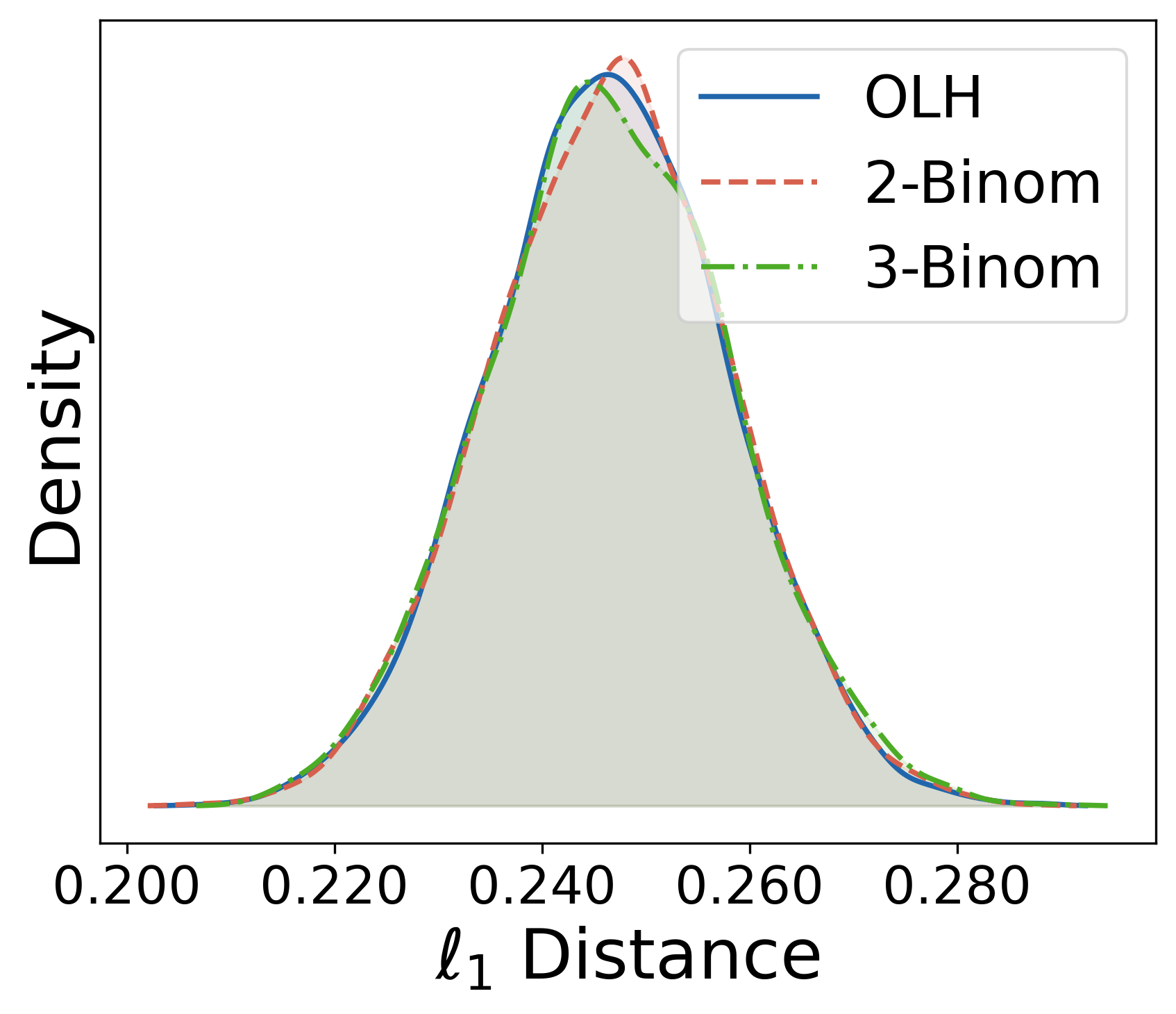}\\[4pt]
    \includegraphics[width=0.24\linewidth]{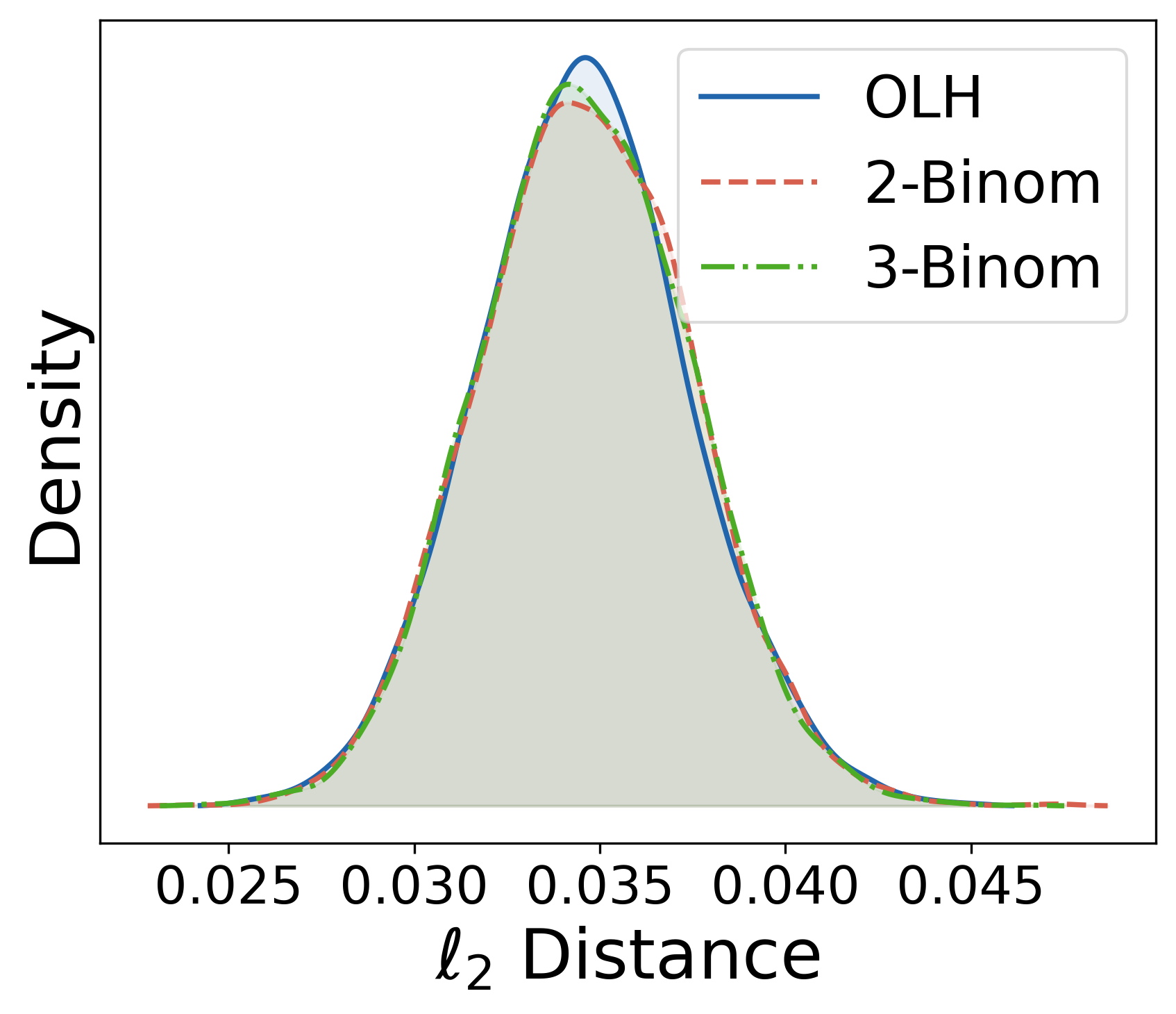}\hfill
    \includegraphics[width=0.24\linewidth]{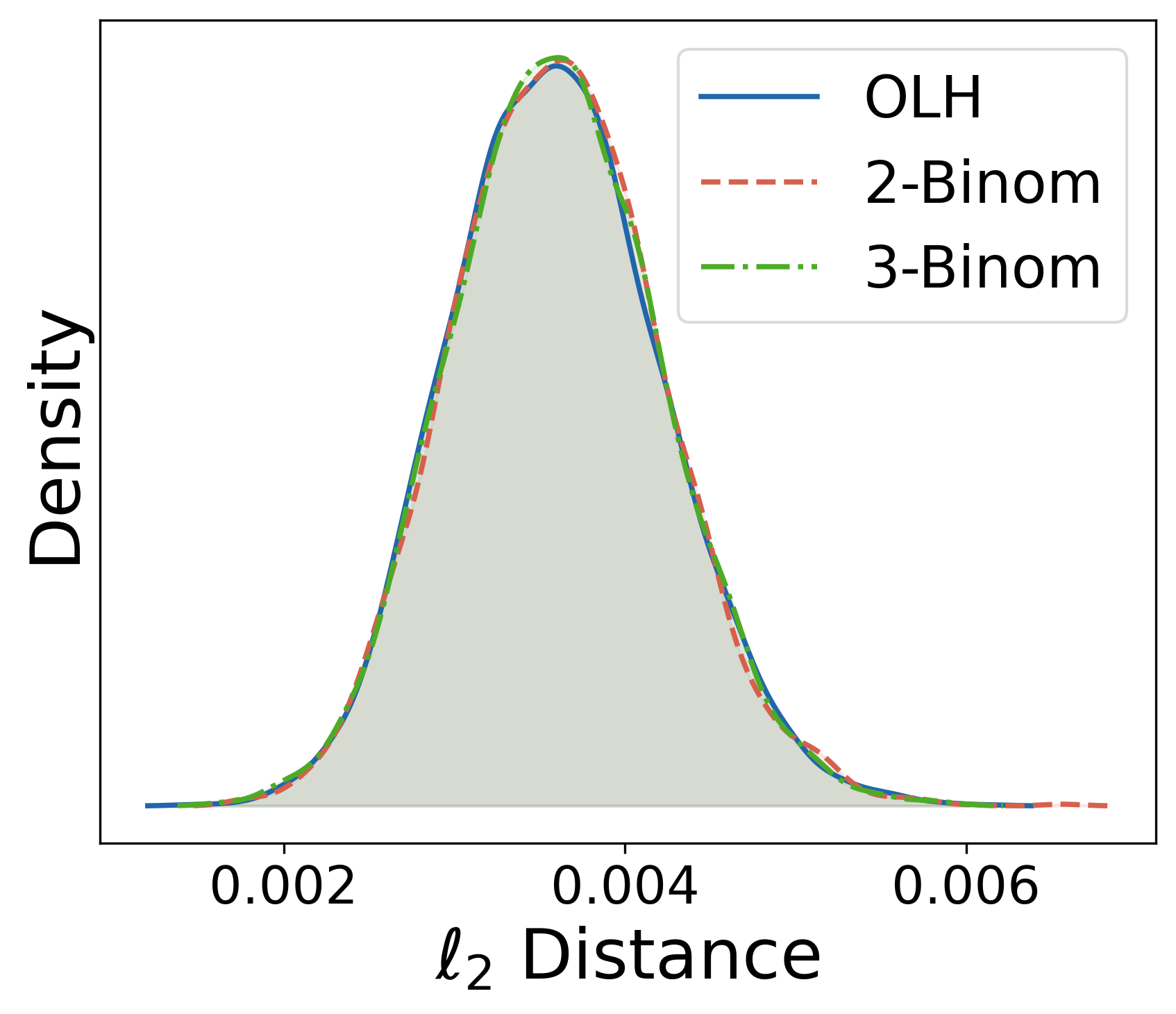}\hfill
    \includegraphics[width=0.24\linewidth]{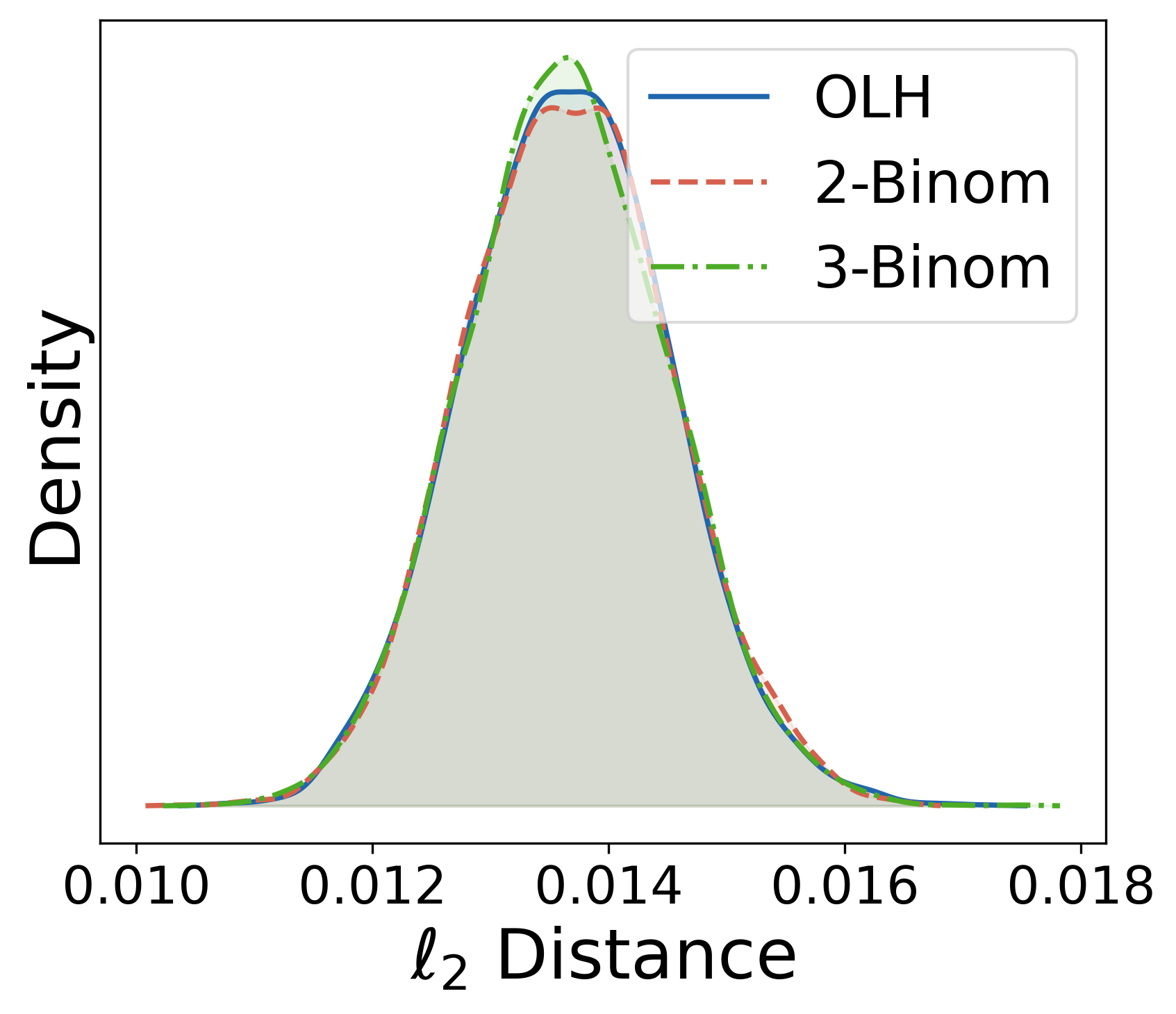}\hfill
    \includegraphics[width=0.24\linewidth]{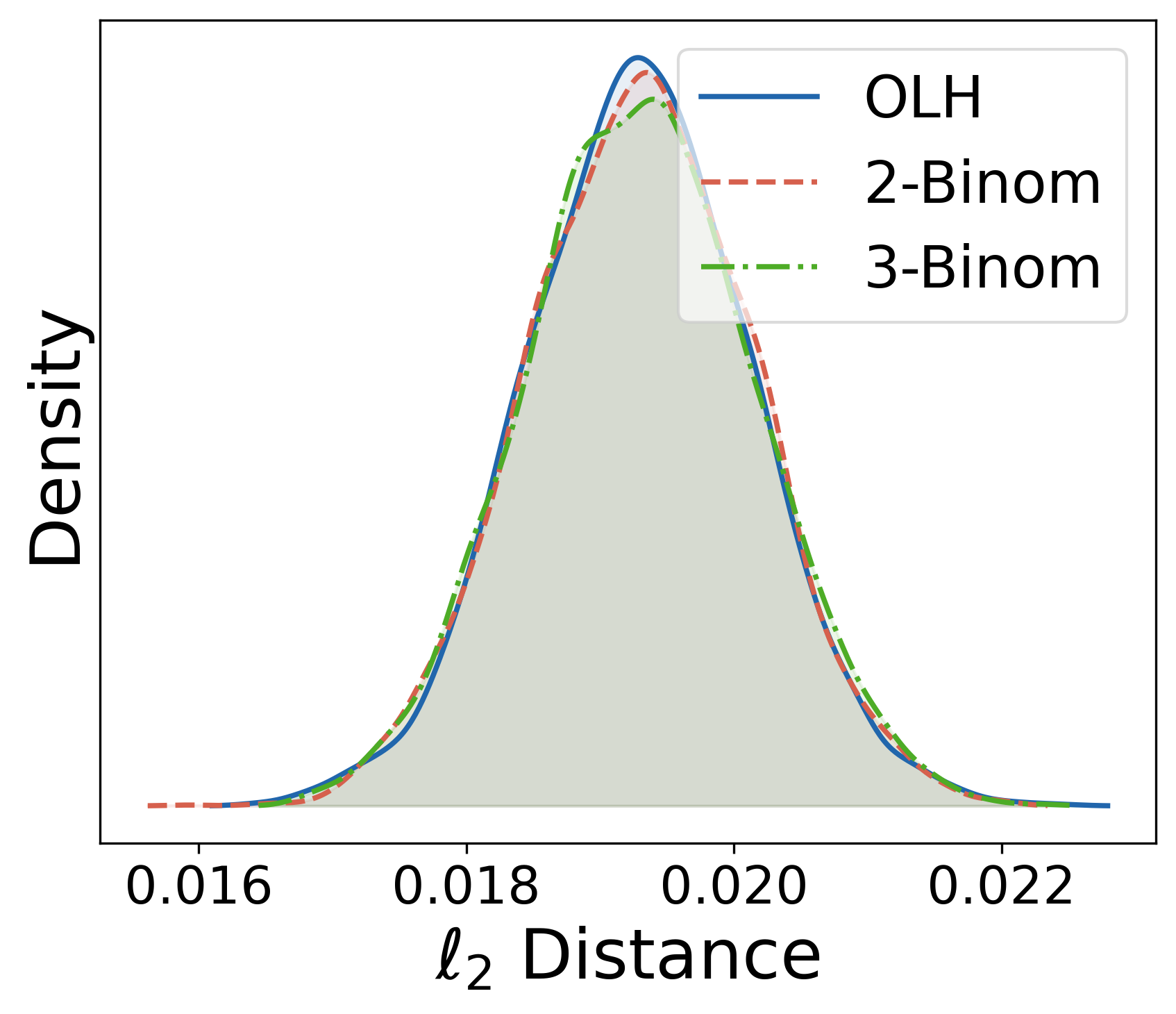}
    \vspace{-8pt}
    \caption{KDE of $\ell_1$ and $\ell_2$ distances across 5{,}000 repetitions at $\varepsilon = 2.0$. Datasets from left to right: Adult, MSNBC, Kosarak, BMS-POS.}
    \label{fig:kde}
\end{figure}

\subsection{Comparison of Empirical Variance}

Finally, we verify that OLH, 2-Binom, and 3-Binom agree in terms of per-value variances. We perform 5{,}000 repetitions for each method at $\varepsilon = 2.0$ and compute the empirical variance of $\hat{f}_v$ for each $v \in \mathcal{D}$. The results are shown in Figure \ref{fig:variance}, where domain values on the $x$-axis are sorted in descending order of their true frequency $f_v$, so that high-frequency values appear on the left and low-frequency values on the right.
 
\begin{figure}[t]
    \centering
    \includegraphics[width=0.24\linewidth]{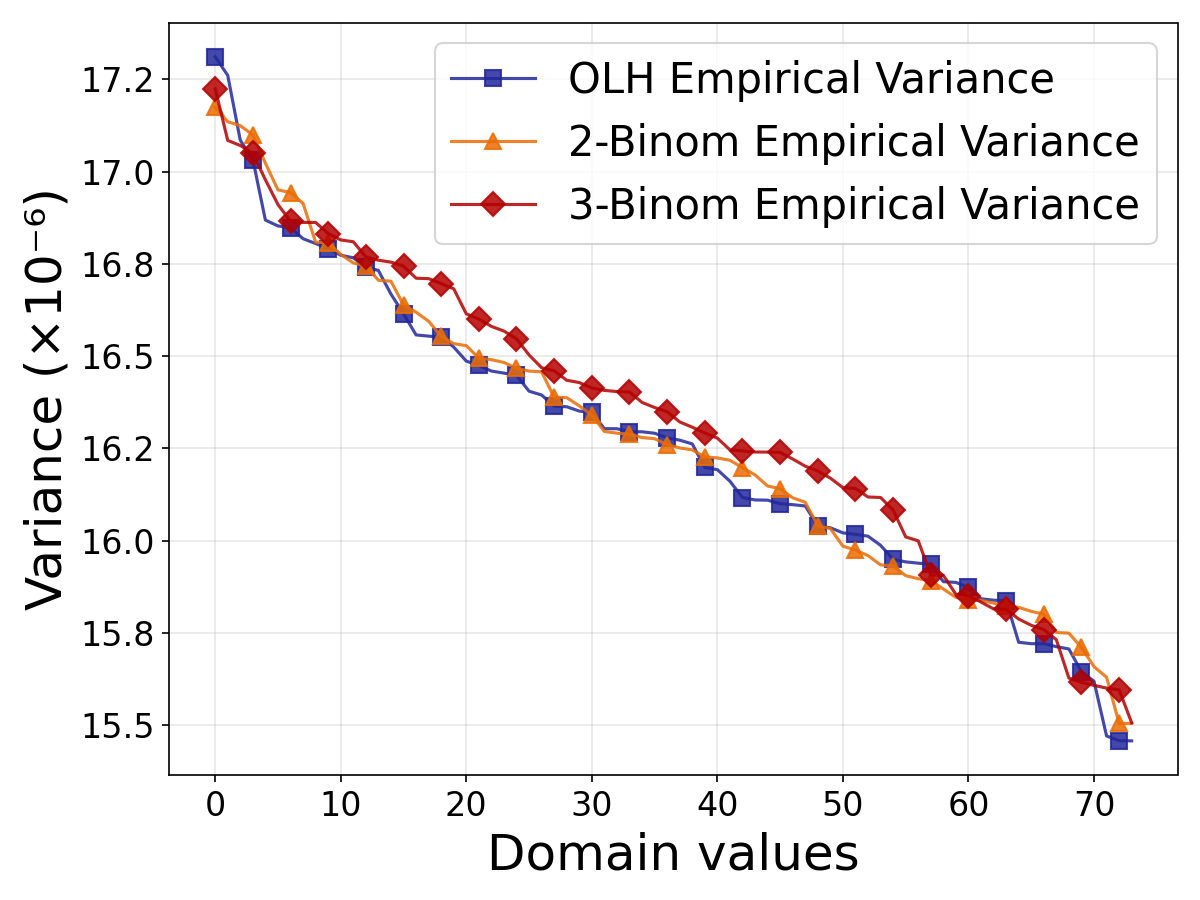}\hfill
    \includegraphics[width=0.24\linewidth]{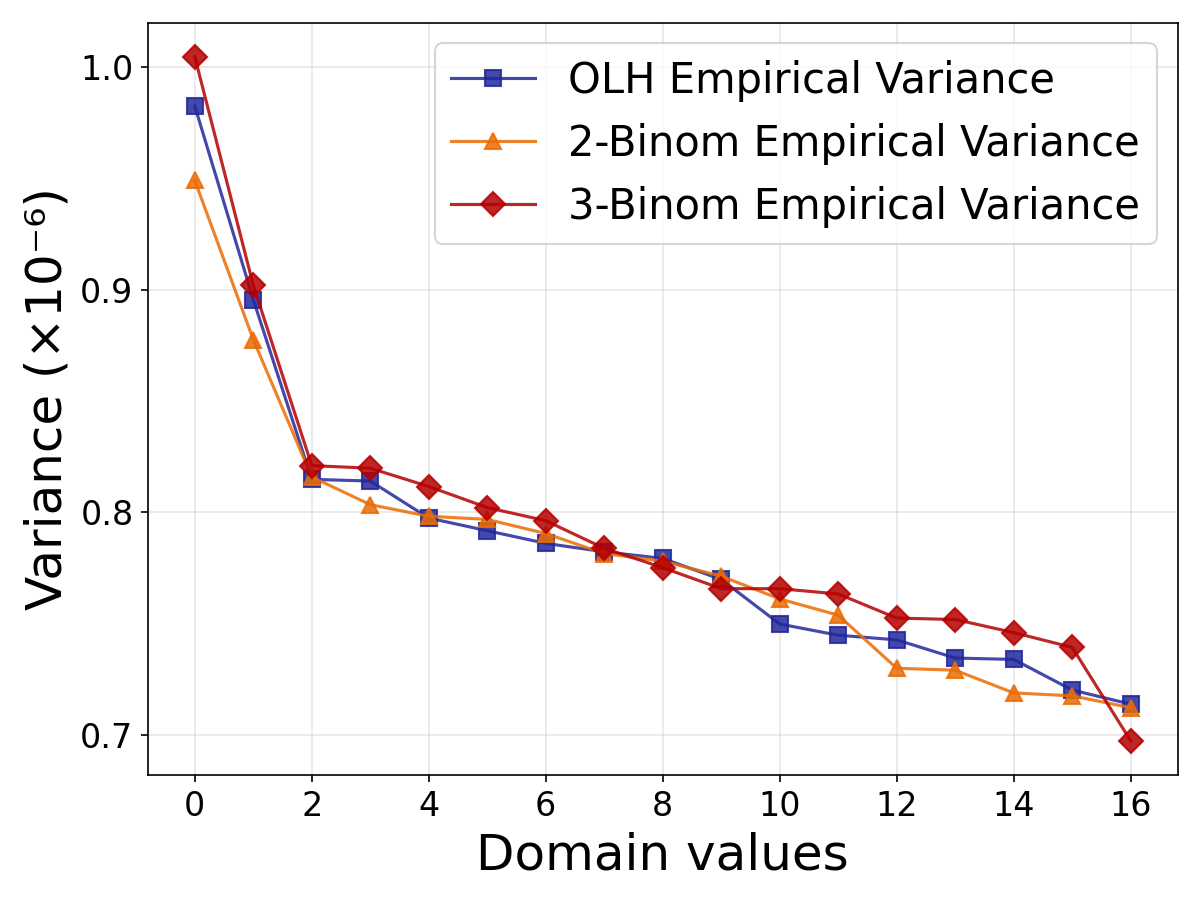}\hfill
    \includegraphics[width=0.24\linewidth]{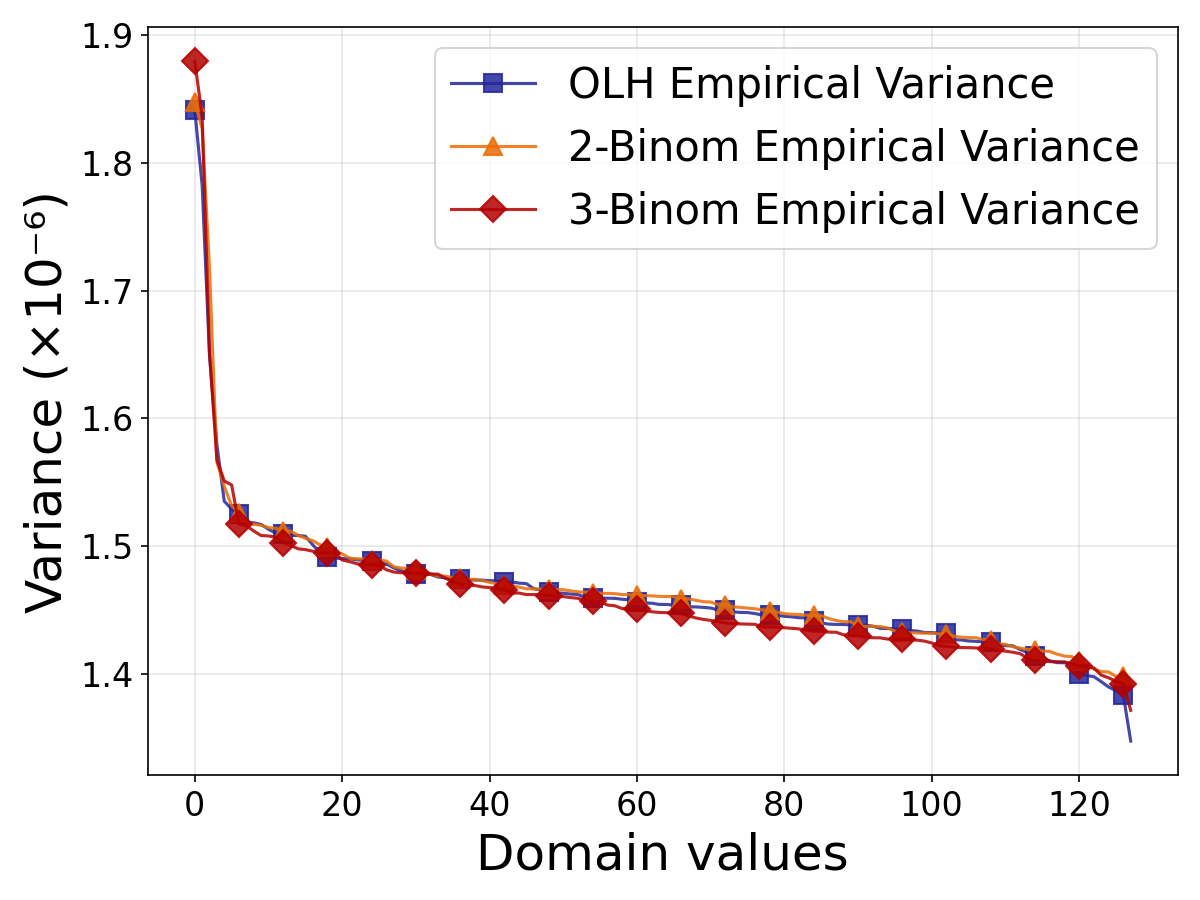}\hfill
    \includegraphics[width=0.24\linewidth]{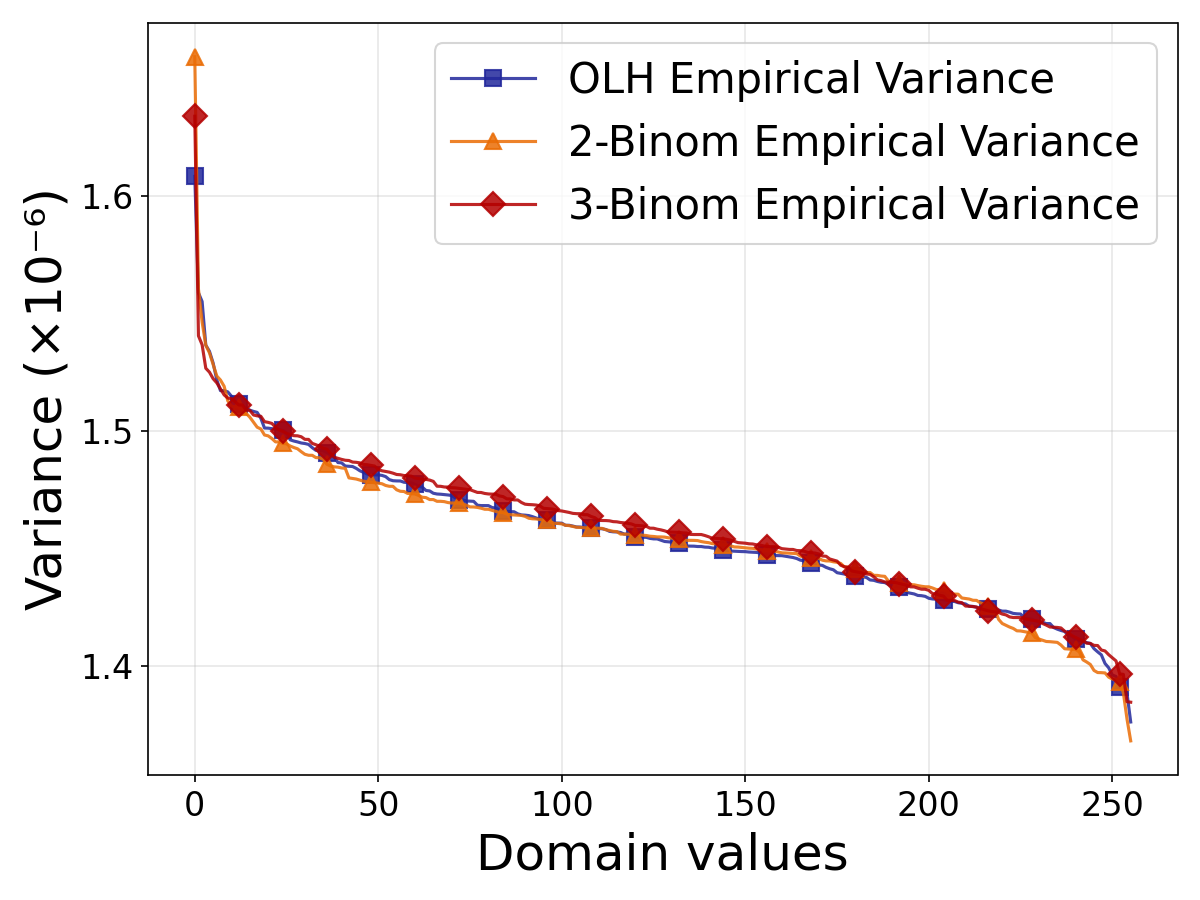}
    \vspace{-8pt}
    \caption{Empirical variances of OLH, 2-Binom, and 3-Binom at $\varepsilon = 2.0$. Datasets from left to right: Adult, MSNBC, Kosarak, BMS-POS.}
    \label{fig:variance}
\end{figure}
 
Across all four datasets, the empirical variance curves of OLH, 2-Binom, and 3-Binom track each other closely and follow the same decreasing trend. This is consistent with our theoretical result in Section \ref{sec:utility}, and the fact that variance is positively correlated with $f_v$. We note that the shapes of the variance curves may differ across different datasets, which is caused by the frequency distributions of the datasets. On MSNBC, the variance drops sharply from index 0 to index 2 and then flattens into a nearly horizontal plateau, since the dataset has two dominant categories whose high $f_v$ causes a distinctly elevated variance for those values. On Kosarak and BMS-POS, the initial drop is steep but the curves decay more gradually over the wider domain, reflecting the long-tail frequency distribution of the datasets. On Adult, the decay is more gradual and uniform, consistent with the smoother age distribution in that dataset.
 
Overall, in all cases, the empirical variance curves of 2-Binom and 3-Binom closely follow those of OLH. This confirms that both methods preserve not only the mean estimation accuracy of OLH but also its per-value variances, providing empirical validation of the theoretical guarantees established in Section~\ref{sec:utility}.

\vspace{-4pt}
\section{Related Work}

LDP has been an active area of research \cite{cormode2018privacy,yang2023local}, with several frequency estimation protocols proposed and studied in the literature, including Generalized Randomized Response (GRR) \cite{kairouz2016discrete}, RAPPOR \cite{erlingsson2014rappor}, Optimized Unary Encoding (OUE), and Optimized Local Hashing (OLH) \cite{wang2017locally}. Among them, OLH is particularly attractive for large $d$ due to its per-user communication cost and optimized variance, and it has become one of the most widely adopted protocols in the LDP literature. For example, Wang et al.~\cite{wang2018locally,wang2019locally} used OLH in frequent itemset mining and heavy hitter identification. Cormode et al.~\cite{cormode2019answering} proposed methods for hierarchical range query answering via LDP frequency estimators such as OLH. Xu et al.~\cite{xu2020collecting} used OLH for analyzing data jointly from multiple services under LDP. Qian et al.~\cite{qian2023collaborative} used OLH for multi-dimensional value collection under LDP. Xu et al.~\cite{xu2023mlpkv} proposed MLPKV for private key-value data collection using OLH.

There were also studies conducted regarding the fairness and adversarial robustness aspects of LDP protocols, including OLH. Gursoy et al.~\cite{gursoy2022adversarial} analyzed protocols, including OLH, from a Bayesian adversarial perspective. Arcolezi et al.~\cite{arcolezi2023risks} demonstrated the risks of collecting multidimensional data under LDP and used OLH as one of the protocols. Cao et al.~\cite{cao2021data} demonstrated data poisoning attacks on LDP protocols, and used OLH as one of the protocols. Balioglu et al.~\cite{balioglu2025don} specifically studied OLH by focusing on the privacy and poisoning-related impacts of hash function choice. Many other works can also be found which utilize OLH; thus, many works would benefit from our contribution of increasing the speed of OLH simulations.

Several simulation and benchmarking platforms for LDP protocols have been developed. Cormode et al.~\cite{cormode2021frequency} introduced \textsc{Pure-LDP}, a Python benchmarking package implementing OLH, OUE, and other protocols. Arcolezi et al.~\cite{arcolezi2022multi} developed \textsc{Multi-Freq-LDP}, a toolkit supporting one-time, multidimensional, and longitudinal frequency estimation with local hashing. Arcolezi and Gambs~\cite{arcolezi2024revealing} built LDP-Auditor, a framework for empirically auditing the privacy loss of LDP protocols through distinguishability attacks across eight frequency estimation protocols. Khodaie et al.~\cite{khodaie2025postprocessing} introduced a benchmarking platform evaluating LDP protocols and post-processing methods across multiple datasets and utility metrics. The existing OLH implementations in all of these platforms follow the standard $O(nd)$ simulation approach; our proposed method can serve as a drop-in replacement to improve their execution times.
 
Finally, our work builds on Karatas and Gursoy~\cite{karatas2026accelerating}, who proposed Binomial modeling for bitvector-based LDP protocols (RAPPOR and OUE), reducing their simulation complexity from $O(nd)$ to $O(n+d)$. Extending their approach to OLH is non-trivial, as the hash-based structure of OLH introduces a qualitatively different perturbation mechanism: users whose true value differs from $v$ can contribute to $\mathrm{Sup}(v)$ through either a hash collision or a perturbation event, requiring a distinct Binomial decomposition. 

\section{Conclusion} \label{sec:conclusion}

In this paper, we addressed the $O(nd)$ simulation cost of OLH, a widely used hash-based LDP protocol, by introducing two Binomial-based simulation algorithms: 2-Binom and 3-Binom. Our algorithms replace user-by-user simulation with draws from a small number of Binomial random variables, reducing complexity to $O(n+d)$ and potentially further to $O(d)$ per repetition when counts are precomputed. We formally proved that both algorithms yield unbiased frequency estimates and carry variance identical to that of the original OLH simulation. Our algorithms were also validated empirically using four real-world datasets and a range of $\varepsilon$ budgets. Results show that 2-Binom and 3-Binom match the estimation results of standard OLH simulation while reducing execution times from several minutes to milliseconds. In future work, we plan to incorporate 2-Binom and 3-Binom into LDP benchmarking platforms and apply the speedup to downstream OLH-based applications.

\section*{Acknowledgments}
\vspace{-2pt}

This work was supported by the Scientific and Technological Research Council of Türkiye (TUBITAK) under grant number 123E179 and the BAGEP Outstanding Young Scientist Award. The authors thank TUBITAK and the Science Academy for their support.

\bibliographystyle{splncs04}
\bibliography{references}

\appendix

\section{Simplification of the Variance Bracket in Theorem~\ref{thm:variance-three}}
\label{appendix:bracket}

We show that substituting $p = \frac{e^\varepsilon}{e^\varepsilon+g-1}$ and $q = \frac{1}{e^\varepsilon+g-1}$ into the bracket:
\begin{equation}
    p(1-p) + (g-1)q(1-q) + (p-q)^2\left(1-\frac{1}{g}\right)
\end{equation}
yields $1 - \frac{1}{g}$. Let $\beta = e^\varepsilon + g - 1$ for simplicity, so that $p = \frac{e^\varepsilon}{\beta}$ and $q = \frac{1}{\beta}$. We expand each term individually. First expand $p(1-p)$:
\begin{equation}
    p(1-p) = \frac{e^\varepsilon}{\beta} - \frac{e^{2\varepsilon}}{\beta^2} = \frac{e^\varepsilon \beta - e^{2\varepsilon}}{\beta^2} = \frac{e^\varepsilon(\beta - e^\varepsilon)}{\beta^2} = \frac{e^\varepsilon(g-1)}{\beta^2}
\end{equation}
where we used $\beta - e^\varepsilon = g - 1$. Next, expand $(g-1)q(1-q)$:
\begin{equation}
    (g-1)q(1-q) = \frac{g-1}{\beta} - \frac{g-1}{\beta^2} = \frac{(g-1)(\beta-1)}{\beta^2} = \frac{(g-1)(e^\varepsilon+g-2)}{\beta^2}
\end{equation}
where we used $\beta - 1 = e^\varepsilon + g - 2$. Next, expand $(p-q)^2\left(1 - \frac{1}{g}\right)$:
\begin{equation}
    p - q = \frac{e^\varepsilon - 1}{\beta}, \qquad (p-q)^2 = \frac{(e^\varepsilon-1)^2}{\beta^2}
\end{equation}
\begin{equation}
    (p-q)^2\left(1-\frac{1}{g}\right) = \frac{(e^\varepsilon-1)^2}{\beta^2} \cdot \frac{g-1}{g}
\end{equation}
Sum all three terms over the common denominator $\beta^2$, factor out $(g-1)$ and then simplify the inner bracket:
\begin{align}
    &\frac{1}{\beta^2}\left[e^\varepsilon(g-1) + (g-1)(e^\varepsilon+g-2) + \frac{(e^\varepsilon-1)^2(g-1)}{g}\right]  \\
    &= \frac{g-1}{\beta^2}\left[e^\varepsilon + (e^\varepsilon+g-2) + \frac{(e^\varepsilon-1)^2}{g}\right] \label{eq:47} 
\end{align}
Simplify the inner bracket:
\begin{align}
    e^\varepsilon + (e^\varepsilon+g-2) + \frac{(e^\varepsilon-1)^2}{g} &= 2e^\varepsilon + g - 2 + \frac{e^{2\varepsilon} - 2e^\varepsilon + 1}{g} \\
    &= \frac{g(2e^\varepsilon + g - 2) + e^{2\varepsilon} - 2e^\varepsilon + 1}{g} \\
    &= \frac{e^{2\varepsilon} + 2(g-1)e^\varepsilon + (g-1)^2}{g}
\end{align}
Recognizing that the numerator is a perfect square:
\begin{equation}
    = \frac{(e^\varepsilon + g - 1)^2}{g} = \frac{\beta^2}{g}
\end{equation}
Substitute this back to Eq.~\ref{eq:47}:
\begin{equation}
    \frac{g-1}{\beta^2} \cdot \frac{\beta^2}{g} = \frac{g-1}{g} = 1 - \frac{1}{g}
\end{equation}
which is the result claimed in the proof of Theorem~\ref{thm:variance-three}.
 
\end{document}